\renewcommand{\cite}[1]{\citet{#1}}
\newtheorem{theorem}{Theorem}[section]
\newtheorem{proposition}[theorem]{Proposition}
\newtheorem{corollary}[theorem]{Corollary}
\newtheorem{lemma}[theorem]{Lemma} 
\newtheorem{definition}[theorem]{Definition}
\newtheorem{remark}[theorem]{Remark} 
\newtheorem{example}[theorem]{Example}
\newcommand{\cE}{{\cal E}}
\newcommand{\cK}{{\cal K}}
\newcommand{\cR}{{\cal R}}
\newcommand{\cP}{{\cal P}}
\newcommand{\sigmaunten}{\underline{\sigma}}
\newcommand{\sigmaoben}{\overline{\sigma}}
\newcommand{\qv}[1]{\langle #1 \rangle}
\newcommand{\Einf}{\underline{\mathbb E}}
\newcommand{\EE}{\mathbb{E}}
\newcommand{\FF}{\mathbb{F}}
\newcommand{\HH}{\mathbb{H}}
\newcommand{\cF}{{\cal F}}
\newcommand{\E}{\EE}
\newcommand{\qed}{\mbox{ }~\hfill~$\Box$ \vspace{1ex} }
\newenvironment{proof}{\noindent{\sc Proof: }}{ \qed }
\newcommand{\rmII}{\text{\it I\kern-.08em I\,}}
\newcommand{\rmIII}{\text{\it I\kern-.08em I\kern-.08em I\,}}
\newcommand{\rmIV}{\text{\it I\kern-.08em V\,}}
\begin{document}

\title{Non--Implementability  of Arrow--Debreu Equilibria by Continuous Trading  under Knightian Uncertainty }
\author{
\sc Frank Riedel\footnote{Center for Mathematical Economics, Bielefeld University, 33501 Bielefeld,  Germany. Email: friedel@uni-bielefeld.de}
 \qquad Patrick Bei\ss ner\footnote{Corresponding Author. Center for Mathematical Economics, Bielefeld University, 33501 Bielefeld,  Germany. Email: patrick.beissner@uni-bielefeld.de}
 \\
\small Center for Mathematical Economics\\[-2pt] \small
Bielefeld University }
\date{\today}
\maketitle

\begin{abstract}
Under risk,  Arrow--Debreu equilibria can be implemented as Radner equilibria by continuous trading of few long--lived securities. We show that this result generically fails if there is Knightian uncertainty in the volatility. Implementation is only possible if all discounted net trades of the equilibrium allocation are mean ambiguity--free.
\end{abstract}

\medskip
{\footnotesize{ \it Key words and phrases:} Knightian Uncertainty, Ambiguity, general Equilibrium, Asset Pricing, Radner Equilibrium  \\
{\it \hspace*{0.6cm} JEL subject classification: D81, C61, G11} }

\newpage

\section{Introduction}

A celebrated and fundamental result of financial economics characterizes the situations in which all contingent consumption plans can be financed by trading  few long--lived assets; in diffusion models, asset markets are dynamically complete if the number of risky assets corresponds to the number of independent sources of uncertainty. When information is generated by a $d$--dimensional Brownian motion, $d$ risky assets consequently suffice to span a dynamically complete market.

In such a setting, one can thus expect an equivalence between the rather heroic equilibria of the Arrow--Debreu type -- where all trade takes place on a perfect market for contingent claims at time zero, and no trade ever takes place afterwards, --  and the more realistic Radner equilibra where agents trade long--lived assets dynamically over time.

Such equivalence of static and dynamic equilibria for diffusion models  has been established in different settings and at different levels of generality by \cite{DuffieHuang85}, \cite{DuffieZame89}, \cite{KaratzasEtAl90}, \cite{AndersonRaimondo08}, \cite{RiedelHerzberg13}, \cite{HugonnierMalamudTrubowitz12}.

In this paper, we show that this celebrated equivalence generically breaks down under Knightian uncertainty about volatility. 
 We place ourselves in a framework which makes it as easy as possible for the market to span the equilibrium allocations. Even then, we claim, Arrow--Debreu equilibria will  usually  not be implementable by a dynamic market if there is Knightian uncertainty in individual endowments.
 
 In which sense do we make it easy? First, we consider a model in which a $d$--dimensional Brownian motion with ambiguous volatility generates the economy's information flow. Second, as in the Duffie--Huang--approach, we consider nominal asset markets. The nominal asset structure allows for an exogenously chosen asset structure. If there is no spanning in this setting, one cannot expect spanning in the more demanding real asset setting considered by  \cite{AndersonRaimondo08}, \cite{RiedelHerzberg13}, and \cite{HugonnierMalamudTrubowitz12} where security prices and consumption prices are endogenously determined in equilibrium and linked via the real dividend structure. 
Third, we consider a setting where aggregate endowment is ambiguity--free. This is the ideal starting point for an economic analysis of insurance properties of competitive markets. In this setting, a ``good'' economic institution should lead to an ambiguity--free allocation for all (ambiguity--averse) individuals. 
Indeed, we show that the efficient (and thus, Arrow--Debreu equilibrium) allocations in this Knightian economy provide full insurance against uncertainty. This generalizes the results of \cite{BillotChateauneufGilboaTallon00}, \cite{Dana02}, \cite{Tallon98} and \cite{de2011ambiguity}
to the continuous--time setting with non--dominated sets of priors.

We  study the possibility of implementation  in the so--called Bachelier model where the risky (or, in this Knigtian setting, maybe better: uncertain) asset is given by the Brownian motion itself because  this case is particularly transparent. Indeed, in the classic case, one can then immediately apply the martingale representation theorem to find the portfolio strategies that finance the Arrow--Debreu (net) consumption plans. 
We study under what condition this result holds in an uncertain world. Under Knightian uncertainty about volatility, the martingale representation theorem changes in several aspects. Implementation is possible if and only if the value of net trades is mean ambiguity--free, or in other words, if the expected value of net trades is the same for all priors.  

We thus completely clarify under what conditions one can implement Arrow--Debreu as Radner equilibria. Clearly, being free of  ambiguity in the mean   is  weaker than being free of ambiguity  in the strong sense of having the same probability distribution under all priors. Nevertheless, our result implies that ``generically'', implementation will be impossible under Knightian uncertainty about volatility. We show this explicitly in the case when there is no aggregate uncertainty in the economy. The set of all endowments for which implementation fails  is prevalent.\footnote{In infinite--dimensional settings, there is no obvious notion of genericity. We use the concept of prevalence (or shyness) introduced into theoretical economics by \cite{AndersonZame01} who show that it is a reasonable measure--theoretic generalization to infinite--dimensional spaces of the usual ``almost everywhere''--concept in finite--dimensional contexts. Compare also \cite{HuntSauerYorke92}.}


An additional contribution of our paper concerns  the existence of Arrow--Debreu equilibria; without existence, the question of implementation would be void. Existence is not a trivial application of the well--known results on existence of general equilibrium for Banach lattices as the informed reader might think (and the authors used to think as well). Under volatility  uncertainty, the natural  commodity space  combines the well--known $L^p$--space with some degree of continuity. In fact, the right commodity space   consists of contingent claims that are suitably integrable or even bounded almost surely for all priors, and are \textit{quasi--continuous}. A  mapping is quasi--continuous if it is continuous in nearly all its domain.  The property of quasi--continuity comes for free in the probabilistic setting: Lusin's theorem establishes the fact that any measurable function on a nice topological space is   quasi--continuous. Under volatility uncertainty, this equivalence between measurability and quasi--continuity no longer holds true. We are thus led to study a new commodity space which has not been studied  so far in general equilibrium theory. Compare also the discussion of this space in the recent papers  \cite{EpsteinJi13},  \cite{EpsteinJi14} and \cite{Equilibrium2014Beissner}.

For this commodity space, the available existence  theorems do not immediately apply. The abstract question of existence must thus be dealt with separately, but  we leave the general question of existence for the future as it is not the main concern of this paper. In this paper, we use a different approach to establish existence. In our homogenous  framework, one can show that the efficient allocations coincide with the efficient allocations under risk (compare the related results of \cite{Dana02} in a static setting). When all agents share the same prior, it is well known that the efficient allocations are independent of the prior and can be determined by maximizing a suitable weighted sum of utilities pointwise. As a consequence, the efficient allocations are continuous functions of aggregate endowment;  they are quasi--continuous if aggregate endowment is quasi--continuous. This allows us to establish existence in our new commodity space where quasi--continuity is required. One can just fix any prior $P$ and choose an Arrow--Debreu equilibrium in the expected utility economy where all agents use this prior.  These equilibria are also equilibria under Knightian uncertainty. As a by--product, we obtain indeterminacy of equilibria, as in related Knightian settings, such as \cite{Tallon98}, \cite{Dana02}, \cite{RigottiShannon05}, or \cite{DanaRiedel13}.

The paper is set up as follows. The next section describes the economy. Section 3 studies efficient allocations and Arrow--Debreu equilibria. Section 4 contains our main results on (generic non--)implementability. { Section 5 concludes.} The appendix contains additional material on Knightian uncertainty in continuous time.

\section{\!The Economy under Knightian Uncertainty}

We consider an economy over the time interval $[0,T]$ with Knightian uncertainty.

There is a  finite set of agents $\mathbb{I}=\{1,\ldots,I\}$ agents in the economy who care only about consumption at terminal time $T$. The agents share a common view of the uncertainty in their world in the sense that they all agree on  the same set of  priors $\cP$ on the state space $\Omega=C[0,T]^d$ of possible trajectories for  the canonical $d$--dimensional Brownian motion $W_t(\omega)=\omega(t)$. 


 Quite general specifications are possible here, and we discuss some of them  in the last section. To have a concrete model, we start here with volatility uncertainty for $d$ independent Brownian motions. Let $\Sigma=\prod^d_{k=1} [\underline \sigma^k, \overline \sigma^k]$ for $0<\underline \sigma^k \le \overline \sigma^k, k=1,\ldots,d$.  $\Sigma$ models the possible values of the volatilities of our ambiguous  Brownian motion $W$. The set of priors $\cP$ consists of all probability measures $P$ that make $W$ a martingale such that the  covariation between any $W^k$ and $W^l$ satisfies $\left(\langle W^k, W^l \rangle_t\right)_{k,l=1,\ldots,d} \in \Sigma $ for all $t\ge 0$ $P$--a.s.
In our concrete case, this means that the covariation between two different Brownian motions vanishes and the variation of a Brownian motion $W^k$ satisfies $ \left( \underline \sigma^k\right)^2 t  \le \langle W^k \rangle_t\le \left( \overline\sigma^k \right)^2 t$ for all $t \ge 0$.

 The set of priors is not dominated by a single probability measure. In such a context, sets that are conceived as null by the agents cannot be identified with null set under one probability measure. Indeed, as possible scenarios are described by  a whole class of potentially singular priors, an event can only be considered as negligible or null when it is a null sets under all priors simultaneously. Such sets are called polar sets; the corresponding sure events, those that have probability one under all priors, are called \textit{quasi--sure} events.

These issues require a reconsideration of some measure theoretic results. Under risk, a measurable function is ``almost'' continuous in the sense that for every $\epsilon>0$ there is an open set $O$ with probability at least $1-\epsilon$  such that the function is continuous on $O$; this is Lusin's theorem. Under non--dominated Knightian uncertainty, this Lusin property, or quasi--continuity, does not come for free from measurability, and one needs to impose it. We refer to  \cite{EpsteinJi13} and \cite{DenisHuPeng11} for the financial and measure--theoretic background.

 A proper commodity space under non--dominated Knightian uncertainty then consists of all bounded quasi--continuous functions; it is denoted by   $\HH=L^\infty_\cP$.  The boundedness assumption of endowments  is made for ease of exposition and to keep the arguments as concise as possible for our aims; it can be relaxed, of course, as we discuss later on. The consumption set, denoted by $\HH_+$, consists of quasi-surely positive functions in $\HH$.

Agents' preferences are given by  Gilboa--Schmeidler--type expected utility functionals of the form
$$ U^i(c)= \Einf u^i(c) = \inf_{P \in \cP} E^P u^i(c) $$
for nonnegative consumption plans $c \in \HH_+$ and a Bernoulli utility function
$$ u^i : [0,\infty) \to \mathbb R$$
 which is concave, strictly increasing, sufficiently smooth, and satisfies the Inada condition 
$$\lim_{x \downarrow 0} \frac{\partial u^i}{\partial x} (x)=\infty\,.$$

In the following, we will denote by $\cE$ the economy we just described and refer to it as the \textit{Knightian economy} to distinguish it from expected utility economies $\cE^P$ later on. In the economy $\cE^P$, agents use expected utility $E^P u^i(c)$ for a particular common prior $P \in \cP$; otherwise, the economy has the same structure as $\cE$.

Agents have an endowment $e^i$ which is bounded and bounded away from zero quasi--surely.

In the following, we consider a situation where the market can potentially insure individuals against their idiosyncratic ambiguity because ambiguity washes out in the aggregate. We explicitly do allow for aggregate (and individual) risk. 

\begin{definition}\label{DefNoamb}
$X \in \HH$ is ambiguity--free (in the strong sense) if $X$ has the same probability distribution under all priors $P,Q \in \cP$.



\end{definition}

We assume throughout this paper that
aggregate endowment $e$ is ambiguity--free in the strong sense.
We thus make it as easy as possible for the market to provide insurance.  

\section{Efficient Allocations and Arrow--Debreu Equilibria with No Aggregate Ambiguity}

Before considering the possibility or impossibility of implementing Arrow--Debreu equilibria by trading  few long-lived assets, we need to study existence and properties of equilibria in our context. 
 It will turn out  that efficient allocations, and a fortiori Arrow--Debreu equilibrium allocations are ambiguity--free. If we allow for a complete set of forward markets at time zero, the market is thus able to ensure all individuals against their idiosyncratic Knightian uncertainty.

 We recall the structure of efficient allocations in the homogenous expected utility case. If all agents agree on one particular probability $P$, the efficient allocations are independent of that $P$; indeed, they are characterized by the equality of marginal rates of substitution. For weights $\alpha^i \ge 0$, the efficient allocation $c_\alpha$ maximizes pointwise the sum 
$$\sum_{i\in \mathbb{I}} \alpha^i  u^i(c^i)$$ over all vectors $c \in \mathbb R^I_+$ with $\sum_{i\in \mathbb{I}} c^i \le e(\omega)$. It  is characterized by the first--order conditions
$$ \alpha^i \frac{\partial u^i}{\partial x} (c_{\alpha}^i)=
\alpha^j \frac{\partial u^j}{\partial x} (c_{\alpha}^j)$$ for all agents with strictly positive weights $\alpha^i, \alpha^j>0$; the agents with   weight zero consume zero, of course.
Due to our assumptions, one can then write the efficient allocations as a continuous function
$$c_\alpha=c_\alpha(e)$$
of aggregate endowment. In the following, we denote by $\Delta$ the simplex of weights $\alpha$ that sum up to $1$ and by $\mathbb O=(c_\alpha)_{\alpha\in \Delta}$ the set of efficient allocations in homogenous expected utility economies.

\begin{theorem}\label{ThmEfficient}
\begin{enumerate}
\item Every efficient allocation in $\cE$ is ambiguity--free.
\item The efficient allocations in the Knightian economy $\cE$ coincide with the efficient allocations $(c_\alpha)_{\alpha \in\Delta}$ in homogenous expected utility economies $\cE^P$ and are independent of a particular prior $P \in \cP$.
\end{enumerate}
\end{theorem}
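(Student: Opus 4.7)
My plan is to derive part (ii) first and deduce part (i) at the end, since once we know every efficient allocation has the form $c_\alpha(e)$, ambiguity-freeness is immediate from the ambiguity-freeness of $e$.

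For the inclusion $\mathbb{O} \subseteq \{\text{efficient in }\cE\}$, I would start from the observation that each $c_\alpha^i$ is by construction a deterministic function of $e$; since $e$ has the same law under every $P \in \cP$, so does $c_\alpha^i$, and consequently $U^i(c_\alpha^i) = \Einf u^i(c_\alpha^i) = E^P u^i(c_\alpha^i)$ for any chosen $P \in \cP$. If some feasible $(\tilde c^i)$ Pareto dominated $c_\alpha$ in $\cE$, then fixing any $P$ and using $E^P u^i(\tilde c^i) \ge U^i(\tilde c^i) \ge U^i(c_\alpha^i) = E^P u^i(c_\alpha^i)$ would produce a strict Pareto improvement in the homogenous economy $\cE^P$, contradicting that $c_\alpha$ is efficient there.

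For the reverse inclusion I would use a Negishi-style argument. Each $U^i$ is concave on $\HH_+$ as the infimum of the linear-in-measure functionals $P \mapsto E^P u^i(\cdot)$ composed with the concave Bernoulli index, so a standard separation yields weights $\alpha \in \Delta$ such that an efficient $(c^i)$ maximizes $\sum_i \alpha^i U^i$ on the feasible set. Then I would chain
$$\sum_i \alpha^i U^i(c^i) \;\le\; \inf_{P} E^P\Bigl[\sum_i \alpha^i u^i(c^i)\Bigr] \;\le\; \inf_{P} E^P\Bigl[\sum_i \alpha^i u^i(c_\alpha^i(e))\Bigr] \;=\; \sum_i \alpha^i U^i(c_\alpha^i),$$
where the first step is the subadditivity of $\inf$, the second is the pointwise $\omega$-by-$\omega$ optimality of $c_\alpha$ in the Negishi problem, and the equality uses that $c_\alpha^i(e)$ has a prior-independent law (so $\inf_P E^P$ collapses). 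Since $c_\alpha$ is feasible and $(c^i)$ is a maximizer, every inequality is an equality. In particular, the second one forces $\sum_i \alpha^i u^i(c^i(\omega)) = \sum_i \alpha^i u^i(c_\alpha^i(e(\omega)))$ quasi-surely, and strict concavity of the $u^i$ together with the resource constraint $\sum_i c^i \le e$ pins down $c^i = c_\alpha^i(e)$ quasi-surely, proving (ii). Part (i) then follows instantly: $c_\alpha^i(e)$ is a deterministic function of the ambiguity-free $e$, hence has the same law under every $P \in \cP$.

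The main obstacle will be the Negishi separation step, which must be carried out in the commodity space $\HH = L^\infty_\cP$ of bounded \emph{quasi-continuous} claims rather than in classical $L^\infty$: concavity of $U^i$ is free, but one needs that the utility possibility set is convex with non-empty relative interior in $\HH$, and that $c_\alpha^i(e)$ actually lies in $\HH_+$, i.e.\ is quasi-continuous and quasi-surely bounded away from zero. Both issues reduce to verifying that the map $\alpha \mapsto c_\alpha(\cdot)$ is continuous in its argument and that composition of a continuous function on $[0,\infty)$ with a quasi-continuous $e$ preserves quasi-continuity; the Inada condition and strict positivity of $e$ take care of the interior requirement. The strict-concavity step used at the end is justified by the smoothness and Inada hypotheses, which together with strict monotonicity make the pointwise Negishi maximizer unique.
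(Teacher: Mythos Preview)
Your argument is correct and follows essentially the same route as the paper's proof. Both directions rest on the same three ingredients: (a) each $c_\alpha^i$ is a continuous function of the ambiguity--free $e$, hence ambiguity--free and quasi--continuous; (b) the inequality $U^i\le E^P u^i$ transfers efficiency from $\cE^P$ to $\cE$; (c) Negishi weights together with strict concavity force any efficient allocation to coincide with $c_\alpha$ quasi--surely. The only organizational differences are that the paper proves $c_\alpha$ efficient in $\cE$ via the weighted--sum criterion rather than by a direct Pareto argument, and that it handles the reverse inclusion by contradiction (choosing $P$ with $P(\Gamma)>0$) rather than by collapsing your chain of inequalities; the content is the same. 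One small point: in your chain, equality in the second step gives $\inf_P E^P[\sum\alpha^i u^i(c^i)]=E^Q[\sum\alpha^i u^i(c_\alpha^i)]$ for every $Q$, and you should note explicitly that combining this with the pointwise inequality $\sum\alpha^i u^i(c^i)\le\sum\alpha^i u^i(c_\alpha^i)$ forces $E^Q$--equality for \emph{each} $Q\in\cP$, which is what yields quasi--sure equality rather than merely $P^*$--a.s.\ equality for a single minimizer.
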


\begin{proof}
Note that the efficient allocations in homogenous expected utility economies $\mathbb O$ are ambiguity--free as they can be written as monotone functions of aggregate endowment. As these functions are continuous, they are also quasi--continuous in the state variable and thus do belong to our commodity space.

We first show that these allocations are also efficient in our Knightian economy. 
Fix some weights $\alpha \in \Delta$ and fix some $P \in \cP$. Let $c$ be a feasible allocation.
Then we have
\begin{align*}
\sum_{i\in \mathbb{I}} \alpha^i U^i(c^i) &\le 
\sum_{i\in \mathbb{I}} \alpha^i E^P u^i(c^i)
= E^P \sum_{i\in \mathbb{I}} \alpha^i  u^i(c^i) \,.
\end{align*}
As $c_\alpha$ maximizes the weighted sum of utilities pointwise, we continue with
$$ E^P \sum_{i\in \mathbb{I}}\alpha^i  u^i(c^i)  \le  E^P \sum_{i\in \mathbb{I}} \alpha^i  u^i(c^i_\alpha)
=  \sum_{i\in \mathbb{I}} \alpha^i  E^P  u^i(c^i_\alpha)\,.$$
Now $c_\alpha$ is ambiguity--free, hence we have
$ E^P  u^i(c^i_\alpha)= E^Q u^i(c^i_\alpha)$ for all $P,Q \in \cP$ and thus
$ E^P  u^i(c^i_\alpha)= \Einf   u^i(c^i_\alpha) = U^i(c^i_\alpha)$. We conclude that $c_\alpha$ maximizes the weighted sum of Gilboa--Schmeidler utilities. It is thus efficient.

Now let $d$ be another efficient allocation. By the usual separation theorem, it maximizes the weighted  sum of utilities $$\sum_{i\in \mathbb{I}} \alpha^i U^i(c^i)  $$ for some $\alpha \in \Delta$. 
Set
$$ \Gamma = \left\{ \omega \in \Omega : \mbox{there is $i \in \mathbb I$ with $d^i(\omega) \not= c^i_\alpha(\omega)$}\right\}\,.$$ Due to our strict concavity assumptions, we have
$$ \sum_{i\in \mathbb{I}} \alpha^i u^i(d^i(\omega)) < \sum_{i\in \mathbb{I}} \alpha^i u^i(c_\alpha^i(\omega)) $$ for all $\omega \in \Gamma$. Now assume that $\Gamma$ is not a polar set. Then there is $P \in \cP$ with $P(\Gamma)>0$. Therefore, we have
$$ E^P \sum_{i\in \mathbb{I}} \alpha^i u^i(d^i) < E^P \sum_{i\in \mathbb{I}} \alpha^i u^i(c_\alpha^i) \,.$$
Since $c_\alpha$ is ambiguity--free, $E^P \sum_{i\in \mathbb{I}} \alpha^i u^i(c_\alpha^i) = \sum_{i\in \mathbb{I}} \alpha^i U^i(c_\alpha^i)$.
On the other hand, by ambiguity--aversion, $$ \sum_{i\in \mathbb{I}} \alpha^i U^i(c_\alpha^i)\le E^P \sum_{i\in \mathbb{I}} \alpha^i u^i(d^i)\,,$$ and we obtain a contradiction. We thus conclude that $\Gamma$ is a polar set and thus $d=c_\alpha$ quasi--surely.
\end{proof}

The preceding  theorem obtains the same characterization of efficient allocations in our Knightian case as  \cite{Dana02} does for Choquet expected utility economies. The argument is different, though: as our set of priors does not lead to a convex capacity, one cannot use  the comonotonicity of efficient allocations  to identify a unique worst--case measure for the agents.  Instead, we rely on aggregate endowment being ambiguity--free to reach the  conclusion that all efficient allocations are ambiguity--free and coincide with the efficient allocations of any expected utility economy in which all agents share the same prior. Moreover, we do not have a dominating measure here. 

As a consequence of the theorem we obtain the following generalization of \cite{BillotChateauneufGilboaTallon00} to our non--dominated setting.

\begin{corollary}\label{EffCor}
If there is no aggregate uncertainty, i.e. $e\in \mathbb R_+$ quasi--surely, then all efficient allocations are full insurance allocations.
\end{corollary}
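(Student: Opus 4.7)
The plan is to use Theorem \ref{ThmEfficient} directly: every efficient allocation in $\cE$ coincides with some $c_\alpha \in \mathbb O$, which was constructed as a pointwise solution to the weighted-utility maximization problem, and hence can be written as a continuous function $c_\alpha(\cdot)$ of aggregate endowment. If the aggregate endowment is (quasi-surely) equal to a constant $\bar e \in \RR_+$, then $c_\alpha(e) = c_\alpha(\bar e)$ quasi-surely, and this is a constant in $\RR_+^I$. A constant allocation is by definition a full-insurance allocation.

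So concretely, I would proceed as follows. First, let $d$ be any efficient allocation in the Knightian economy $\cE$. By Theorem \ref{ThmEfficient}(ii), there exist weights $\alpha \in \Delta$ with $d = c_\alpha$ quasi-surely. Second, recall the construction of $c_\alpha$: at each state $\omega$, $c_\alpha(\omega)$ solves the pointwise problem of maximizing $\sum_{i \in \mathbb I} \alpha^i u^i(c^i)$ over $c \in \RR_+^I$ with $\sum_i c^i \le e(\omega)$. Since $e(\omega) = \bar e$ for quasi-every $\omega$, this pointwise problem is the same at quasi-every state; strict concavity of the $u^i$'s (together with the Inada condition ensuring interior solutions for agents with $\alpha^i > 0$) guarantees a unique maximizer $\bar c \in \RR_+^I$, so $c_\alpha^i(\omega) = \bar c^i$ quasi-surely. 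Hence $d^i$ is quasi-surely constant for each $i$, which is the definition of a full-insurance allocation.

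There is essentially no obstacle here; the corollary is a direct consequence of the representation of efficient allocations as pointwise maximizers together with the degeneration of the feasibility constraint when $e$ is a constant. The only conceptual point worth mentioning is that this conclusion would not go through if we merely had $e$ ambiguity-free but random: in that case the pointwise problem varies with $\omega$ and $c_\alpha(e)$ is merely a comonotone (non-constant) function of $e$ — still ambiguity-free, but not full-insurance. It is the combination of Theorem \ref{ThmEfficient} with the special assumption $e \in \RR_+$ that collapses the allocation to a vector of constants.
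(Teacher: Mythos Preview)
Your proof is correct and is precisely the argument the paper has in mind: the corollary is stated without proof immediately after Theorem~\ref{ThmEfficient}, and your derivation---identifying an efficient allocation with some $c_\alpha$, then observing that $c_\alpha = c_\alpha(e)$ is quasi--surely constant when $e$ is---is exactly the intended one--line unpacking.
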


Let us now turn to Arrow--Debreu equilibria. In the first step, it is important to clarify the structure of price functionals. A price is a positive and continuous linear price functional $\Psi$ on the commodity space
$L^\infty_\cP$. 
 A typical representative of these price functionals has the form of a pair $(\psi,P)$ where $\psi$ is the \textit{state--price density} as in the expected utility case and $P$ is some particular prior in $\cP$. Knightian uncertainty adds the new feature  that the market also chooses the measure $P\in\cP$.

An \textit{Arrow--Debreu equilibrium} consists then of a feasible allocation $c=(c^i)_{i \in \mathbb{I}}\in \mathbb H_+^{\mathbb I}$ and a price $\Psi$ such that for all $d \in \mathbb H_+$ 
the strict inequality 
$ U^i(d) > U^i(c^i) $ implies $\Psi(d)>\Psi(e^i)$.

We are going to show that Arrow--Debreu equilibria exist and equilibrium prices are indeterminate.
Recall that $\cE^P$ is the economy in which agents have standard expected utility preferences $U^i_P(c)=E^P u^i(c)$ for the same prior $P \in \cP$ and the same endowments as in our original economy. 
Bewley (Theorem 2 in \cite{Bewley72}) has shown that Arrow--Debreu equilibria exist with state--price densities in $L^1(\Omega,\mathcal{F},P)$ for the economy $\cE^P$. By the first welfare theorem, an equilibrium allocation in $\cE^P$ can be identified with some $c_\alpha$ for  a vector of weights $\alpha\in\Delta$ and the corresponding equilibrium state--price density with 
$\psi_\alpha= \alpha^i   \left(u^i\right)' ( c^i_\alpha)$. Due to our assumption
that endowments are in the interior of the consumption set, all weights $\alpha_i$ are strictly positive.

We are now ready to characterize the equilibria of our Knightian economy.

\begin{theorem}\label{ThmEquilibrium}
\begin{enumerate}
\item
 Let $(c_\alpha,\psi_\alpha)$ be an equilibrium of $\cE^P$ for some $\alpha \in \Delta$ and $P \in \cP$. Then $(c_\alpha, \psi_\alpha \cdot P)$ is an Arrow--Debreu equilibrium of the Knightian  economy.
\item
If $(c,\Psi)$ is an Arrow--Debreu equilibrium of $\cE$, then there exists $\alpha\in \operatorname{rint} A$  and $P \in \cP$ such that $c=c_\alpha$ and $\Psi= \psi_\alpha \cdot P$ with  $$\psi_\alpha= \alpha^i   \left(u^i\right)' ( c^i_\alpha)\quad \textit{ for } i \in\mathbb I.$$ 
\end{enumerate}

In particular, Arrow--Debreu equilibria 
\begin{itemize}
\item exist,
\item their price is indeterminate,
\item and the allocation is ambiguity--free.
\end{itemize}
\end{theorem}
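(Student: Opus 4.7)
My plan is to lift equilibria from the single-prior economy $\cE^P$ to obtain part 1, and to combine the first welfare theorem with a supergradient characterization at ambiguity-free allocations to obtain part 2. The three bulleted consequences (existence, indeterminacy, ambiguity-freeness) will then follow as immediate corollaries.

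For part 1, fix an equilibrium $(c_\alpha,\psi_\alpha)$ of $\cE^P$ and consider the candidate $(c_\alpha,\psi_\alpha\cdot P)$ in $\cE$. By Theorem \ref{ThmEfficient}, $c_\alpha$ is ambiguity-free, hence $U^i(c^i_\alpha)=E^P u^i(c^i_\alpha)$. For any $d\in\HH_+$ satisfying $(\psi_\alpha\cdot P)(d)\le(\psi_\alpha\cdot P)(e^i)$, the same budget constraint holds in $\cE^P$, so by optimality there $E^P u^i(d)\le E^P u^i(c^i_\alpha)$; combining with the trivial bound $U^i(d)\le E^P u^i(d)$ yields $U^i(d)\le U^i(c^i_\alpha)$, which proves optimality in $\cE$. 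Market clearing carries over unchanged. Applying Bewley's theorem to $\cE^P$ then yields existence of Arrow--Debreu equilibria in $\cE$; indeterminacy follows because different choices of $P\in\cP$ produce genuinely different prices $\psi_\alpha\cdot P$; and ambiguity-freeness of the equilibrium allocation is immediate from Theorem \ref{ThmEfficient}.

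For part 2, let $(c,\Psi)$ be an Arrow--Debreu equilibrium of $\cE$. Strict monotonicity of each $U^i$ delivers local non-satiation, so the usual first welfare argument applies and $c$ is Pareto efficient. Theorem \ref{ThmEfficient} then gives $c=c_\alpha$ for some $\alpha\in\Delta$. Because endowments are bounded away from zero and the Inada condition enforces $(u^i)'(0)=\infty$, no agent can be assigned a zero consumption bundle at equilibrium, so $\alpha^i>0$ for every $i$, i.e.\ $\alpha\in\operatorname{rint}\Delta$.

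The main obstacle is pinning down the specific form $\Psi=\psi_\alpha\cdot P$. The plan is to use first-order conditions at the interior optimum, yielding $\lambda_i\Psi\in\partial U^i(c^i_\alpha)$ for some $\lambda_i>0$. Since $c^i_\alpha$ is ambiguity-free, every $P\in\cP$ attains the infimum defining $U^i(c^i_\alpha)$, and a direct concavity estimate shows that $(u^i)'(c^i_\alpha)\cdot P\in\partial U^i(c^i_\alpha)$ for all $P\in\cP$; convexity of $\cP$ under mixing of volatility regimes reduces the relevant supergradient to exactly this family. The delicate point is to rule out non-trivial mixtures: writing $\lambda_i\Psi=(u^i)'(c^i_\alpha)\cdot P_i$ for each agent and matching across $i$, one exploits the mutual singularity of distinct priors in $\cP$ to conclude that a common prior $P_i=P_j=:P$ must be selected, and then $\lambda_i(u^i)'(c^i_\alpha)=\lambda_j(u^j)'(c^j_\alpha)$ holds $P$-a.s. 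Normalizing the scalars $\lambda_i$ to lie in the simplex recovers the Negishi weights $\alpha$ and the identity $\psi_\alpha=\alpha^i(u^i)'(c^i_\alpha)$.
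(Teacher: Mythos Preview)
Your approach matches the paper's: lift $\cE^P$-equilibria via ambiguity-freeness of $c_\alpha$ for part 1, and combine the first welfare theorem with a supergradient characterization for part 2. One point you omit but the paper flags explicitly: Bewley's theorem for $\cE^P$ produces an allocation in $L^\infty(\Omega,\cF,P)$, which is strictly larger than the Knightian commodity space $\HH=L^\infty_\cP$, so one must still verify $c_\alpha\in\HH$; the paper handles this by noting that $c_\alpha$ is a continuous function of the quasi-continuous aggregate endowment $e$ and is therefore itself quasi-continuous. Your part 2 is in fact more detailed than the paper's terse version, which simply asserts that the supergradients of $U^i$ at $c^i_\alpha$ are of the form $\psi_\alpha\cdot P$ with $P$ a minimizer in $\cP$, and observes that every $P\in\cP$ is a minimizer because $u^i(c^i_\alpha)$ is ambiguity-free; your mutual-singularity step to force a common prior across agents is a valid additional argument the paper does not spell out.
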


\begin{proof}
Let $(c_\alpha,\psi_\alpha)$ be an equilibrium of $\cE^P$. $c_\alpha$ obviously clears the market and is budget--feasible in the Knightian economy because we use the same pricing functional as in the economy $\cE^P$.
It remains to show that $c^i_\alpha$ maximizes utility in the Knightian economy subject to the budget constraint. 

In the first place, we need to verify that $c_\alpha$ belongs to our commodity space $\mathbb H$ (which is smaller than the commodity space $L^\infty(\Omega, \cF, P)$ considered by  \cite{Bewley72} as it contains only quasi--continuous elements).
But we have already noted above that the $c_\alpha$ are quasi--continuous, and thus elements of $\mathbb H_+$, because they can be written as continuous functions of $e$.

 Let $d$ be budget--feasible for agent $i$. As $c_\alpha$ is an Arrow--Debreu equilibrium in the expected utility economy $\cE^P$, we have  $E^P u^i(c_\alpha^i) \ge E^P u^i(d)$. 
As $c_\alpha^i$ is ambiguity--free by Theorem \ref{ThmEfficient}, we have  $ U^i(c_\alpha^i)=E^P u^i(c_\alpha^i)$. 
Therefore,
$$U^i(d) \le E^P u^i(d) \le E^P u^i(c_\alpha^i) = U^i(c_\alpha^i)$$
and we are done.

For the converse, let $(c,\Psi)$ be an Arrow--Debreu equilibrium of $\cE$. By the first welfare theorem and Theorem \ref{ThmEfficient}, there exist $\alpha\in \Delta$ with $c=c_{\alpha}$. All $\alpha^i>0$ as individual endowments are strictly positive. (Otherwise, $c^i_\alpha=0$ which is dominated by $e^i$.) Due to utility maximization, $\Psi$ has to be a supergradient of $U^i$ at $c^i_\alpha$.
 The set of  supergradients consists of linear functionals of the form $\psi_\alpha \cdot \cP$, where $P$ is a minimizer in the set of priors. Since $u^i(c^i_\alpha)$ is ambiguity--free,  $E^Pu^i(c^i_\alpha) $ is constant on $\mathcal{P}$ and hence every element in $\mathcal{P}$ is a minimizer of the Gilba--Schmeidler--type expected utility.
\end{proof}

\section{(Non--)Implementability by Continuous Trading of Few Long--Lived Securities}

We now tackle the question if the efficient allocations of Arrow--Debreu equilibria can be implemented by   trading a few long--lived assets dynamically over time. Under risk, the answer is (essentially) affirmative. If we allow the market to select the asset structure,  \cite{DuffieHuang85} establish  implementability. In this case, we have purely nominal assets whose dividends are not directly related to commodities. One can thus choose their prices independently of consumption prices.  In general, of course, the asset price structure with real assets is  endogenous. In that case, the question of Radner implementability is much more complex and was only recently solved by  \cite{AndersonRaimondo08}, \cite{RiedelHerzberg13} and \cite{HugonnierMalamudTrubowitz12}.  If the asset market is \textit{potentially complete} in the sense that sufficiently many independent dividend streams are traded, then one can obtain endogenously dynamically complete asset markets in sufficiently smooth Markovian economies. For non--smooth economies and non--Markovian state variables, the question is still open.

As we focus on the intrinisic limit of  implementability which is created by Knightian uncertainty, we make here the life as easy as possible for the financial market: as in \cite{DuffieHuang85}, we consider the case with nominal assets freely chosen by the market. Since the equivalence between Arrow--Debreu and Radner equilibria usually breaks down, the result is stronger if we allow the market to choose the asset structure for nominal assets. If one cannot even implement the Arrow--Debreu equilibrium in the nominal case, one cannot do so with real assets either.

\paragraph{The Bachelier Market}
We consider first the simplest case of a so--called Bachelier market. 
There is a riskless asset $S^0_t=1$. Moreover, the price of the other $d$ assets is given by our $d$-dimensional ambiguous (or $\mathbb{E}$)--Brownian motion $S_t=B_t$. 
A  trading strategy  then consists of a process $(\theta^1,\ldots,\theta^d)=\theta \in\Theta(S) $, the space of admissible integrands for (Knightian) Brownian motion  (see Appendix A.2, \cite{EpsteinJi13} or \cite{DenisHuPeng11} on details for the admissible integrands). The agents start with wealth zero (as there is no endowment nor consumption at time zero). Their gains from trade at time $t$ are  then
\begin{equation}\label{1}
G^\theta_t= \theta_t S_t=  \int_0^t   \theta_u\textnormal{d}S_u=\sum_{1\leq k\leq d}   \int_0^t \theta^k_u  \textnormal{d}S^k_u\,.
\end{equation}
 They can afford to consume $c^i\in \mathbb H_+ $ with $(c^i-e^i) \psi=G^{\theta^i}_T$ where $\psi$ is the spot consumption price at time $T$, a nonnegative $\cF_T$--measurable function.
A budget--feasible consumption--portfolio plan $(c^i, \theta^i)$ is a pair of a consumption plan and a trading strategy that satisfy the above budget constraint.

 A \textit{Radner equilibrium} consist  of a spot consumption price $\psi$ and portfolio--consumption plans  $(c^i,\theta^i)_{i\in\mathbb{I}}$ such that markets clear, i.e. $\sum c^i = e$, $\sum \theta^i = 0$, and agents maximize their utility over all portfolio--consumption plans that satisfy the budget constraint.

\begin{theorem}\label{ThmRad}
In the Bachelier model with asset prices $S=B$, an Arrow--Debreu equilibrium of the form $(c,   \psi \cdot P)$ can be implemented as a Radner equilibrium if and only if the value of the net trades $\xi^i=\psi(c^i-e^i)$ are \textit{mean ambiguity--free}, i.e. for all $P,Q \in \cP$
$$E^P \xi^i= E^Q \xi^i\,.$$ 
\end{theorem}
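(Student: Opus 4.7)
My plan is to reduce the implementability question to a martingale representation problem under volatility uncertainty, in the spirit of \cite{DuffieHuang85} but with the classical Kunita--Watanabe theorem replaced by a representation theorem for symmetric martingales in the $G$--expectation framework.

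For the necessity direction, I would start from the Radner budget constraint $\xi^i=\psi(c^i-e^i)=\int_0^T\theta^i_u\,dB_u$ quasi--surely. Since $B$ is a martingale under each $P\in\cP$ and $\theta^i$ is an admissible integrand (Appendix A.2), the gain process $G^{\theta^i}$ has vanishing $P$--expectation at $T$ for every $P\in\cP$. Consequently $E^P\xi^i=0$ for every $P\in\cP$, which is precisely the mean ambiguity--freeness condition.

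For sufficiency, assume $E^P\xi^i=E^Q\xi^i$ for all $P,Q\in\cP$. Budget feasibility of the Arrow--Debreu equilibrium at price $\psi\cdot P$ forces $E^P\xi^i=0$, so by the hypothesis this common value is zero under every prior. I would then invoke the martingale representation theorem for symmetric random variables in the $G$--expectation framework (Soner--Touzi--Zhang; compare also \cite{DenisHuPeng11} and \cite{EpsteinJi13}) to obtain an admissible $\theta^i\in\Theta(S)$ with $\xi^i=\int_0^T\theta^i_u\,dB_u$. Market clearing $\sum_i\theta^i=0$ follows by summing these representations: $\sum_i\xi^i=0$ because $\sum_i(c^i-e^i)=0$, and uniqueness of the integrand (It\^o isometry under $G$--expectation together with the bound $\underline\sigma^k>0$) yields the claim. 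Utility maximization is then automatic: for any admissible $\eta$ one has $E^P\int_0^T\eta_u\,dB_u=0$, so the Radner budget set is contained in the Arrow--Debreu budget set at price $\psi\cdot P$, on which $c^i$ is already optimal and now Radner--feasible.

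The hard part is the symmetric martingale representation theorem. Because $\cP$ is non--dominated, one cannot fix a single reference measure and read off $\theta^i$ from Kunita--Watanabe; under each individual $P$ one would only obtain a decomposition $\xi^i=E^P\xi^i+\int_0^T\theta^{i,P}_u\,dB_u+M^{i,P}_T$ with a $P$--orthogonal remainder whose structure depends on the prior and typically involves $d\langle B\rangle$--compensator terms. The hypothesis $E^P\xi^i=E^Q\xi^i$ is precisely what kills this orthogonal remainder simultaneously under all priors and allows the representations under different $P$ to be glued into a single admissible integrand; its failure is what will drive the generic non--implementability results in the remainder of the paper.
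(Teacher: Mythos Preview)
Your proposal is correct and follows essentially the same approach as the paper's proof: both directions hinge on the fact that stochastic integrals against $B$ are symmetric $\EE$--martingales, the sufficiency direction invokes the martingale representation theorem for symmetric martingales (the paper's Theorem~\ref{MRT} and Corollary~\ref{MRTCor}, which is precisely the Soner--Touzi--Zhang result you cite), market clearing is obtained from uniqueness of the integrand, and optimality follows because the Radner budget set is contained in the Arrow--Debreu budget set at price $\psi\cdot P$. The only cosmetic difference is that the paper cites Proposition~3.3 of \cite{soner2011martingale} directly for the vanishing--integrand step, whereas you phrase it via the It\^o isometry and the strict positivity of $\underline\sigma^k$; these are the same argument.
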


\begin{proof}
 Let $(c, \psi \cdot P) $ be an Arrow-Debreu Equilibrium as in Theorem \ref{ThmEquilibrium}. 
 
 Suppose we have an implementation in the Bachelier model with trading strategies $\theta^i$.  
The Radner budget constraint gives 
$$\xi^i= G^{\theta^i}_T= \int_0^T \theta^i_t \textnormal{d}B_t \quad \mathcal{P }\textnormal{-quasi surely}.$$ Stochastic integrals are symmetric $\mathbb{E}$--martingales, i.e.
$E^P \int_0^T \theta^i_t \textnormal{d}B_t = \mathbb E \int_0^T \theta^i_t \textnormal{d}B_t$ for all $P\in\mathcal{P}$.
Consequently the value of each net trade is mean ambiguity--free.

We show now that implementation is possible if net trades are mean ambiguity--free.
We divide the proof into three steps.
First, we introduce  the candidate trading strategies for agent $i\neq I$ and show  market clearing in the second step. Finally we  show  that these strategies are maximal  in the budget sets. 

 Let the value of the net trade $\xi^i$  be mean ambiguity--free for all agents $i\in\mathbb I$. The Arrow--Debreu budget constraint gives   $0= E^P\xi^i= \mathbb{E}\xi^i$. Consequently by Corollary  \ref{MRTCor}, the process $t\mapsto\mathbb{E}_t\xi^i$ is a symmetric $\EE$-martingale, so by    the martingale representation theorem \ref{MRT}
\begin{eqnarray*}
\mathbb{E}_t\xi^i= \int_0^t \theta_r^{i}  \textnormal{d} S_r,\quad \mathcal{P }\textnormal{-quasi surely}
\end{eqnarray*}
and
$$\xi^i = \int_0^T \theta^i_t \textnormal{d}S_t \,.$$
Hence, the trading strategies $\theta^i$ are candidates for trading strategies in a Radner equilibrium with allocation $c$ and spot consumption price $\psi$. 

By market--clearing in an Arrow--Debreu equilibrium, we have
$$0 = \sum_{i\in\mathbb I} \xi^i = \int_0^T  \sum_{i\in\mathbb I}  \theta^i_t 
\textnormal{d}B_t \,.$$
As stochastic integrals that are zero have a zero integrand (even under Knightian uncertainty; see Proposition 3.3 in \cite{soner2011martingale}), we conclude that the portfolios clear, $\sum_{i\in\mathbb I} \theta^i = 0$.

It remains to check that the consumption--portfolio strategy $(c^i,\theta^i)$ is optimal for agent $i$ under the Radner--budget constraint. Suppose there is a trading strategy $(d,\eta)$ with 
 $$ \psi (d-e^i)= \int_0^T \eta_t \textnormal{d}S_t \,.$$
We then have $$E^P \psi(d-e^i) = \mathbb E\int_0^T \eta_t \textnormal{d}S_t =  0\,,$$
and $d$ is thus budget--feasible in the Arrow--Debreu model. 
We conclude
that $U^i(d) \le U^i(c^i)$. 
\end{proof}

To illustrate the theorem,  we consider a concrete example.

\begin{example}
Let $e(\omega)\equiv 1$, $I=2$, $d=1$, and $u^i=\log$ for $i=1,2$. Assume $e^1=\phi(B_T)=\exp(B_T)\wedge 1$ and $e^2 =1-e^1$.  By Corollary \ref{EffCor} and Theorem \ref{ThmEquilibrium}, equilibrium allocations are full insurance; therefore, the  state price density is deterministic, so that we may take  without loss of generality $\psi =\frac{\alpha_i}{c_\alpha^i} =1$. 

In this case, the expected value of net trades $\xi^i=c^i-e^i$ depends on the particular measure $P \in \cP$. For example, if $B$ has constant volatility $\sigma$ under $P$, then
$$E^P \exp(B_T)\wedge 1 = \exp(\sigma^2/2) \Phi(-\sigma) + 1/2 $$
where $\Phi$ is the standard normal distribution. 
Radner implementation is therefore impossible.
\end{example}


The previous example suggests that Radner implementation might not be expected, in general. In the next step, we clarify this question in a world without aggregate uncertainty. We know from our analysis in the previous section that all Arrow--Debreu equilibria fully insure all agents in such a setting. We claim that ``for almost all'' economies, or ``generically'', Radner implementation is impossible.

While the notion of ``almost all'' has a natural meaning in the finite--dimensional context because one can use Lebesgue measure to define negligible sets as null sets under that measure, the notion of ``almost all'' does not generalize immediately to infinite--dimensional Banach spaces because there is no translation--invariant measure that assigns positive measure to all open sets on such spaces. 
\cite{AndersonZame01} develop the notion of prevalence which coincides with the usual notion of full Lebesgue measure in finite--dimensional contexts and is thus an appropriate generalization to infinite--dimensional settings. 

In the following, we fix aggregate endowment with no uncertainty $e >0$ and consider the class of economies parametrized by individual endowments
$$ \cK = \left\{ (e^i)_{i \in \mathbb I} \in \mathbb H_+^{\mathbb I} : 
\sum_{i\in\mathbb I} e^i = e \right\} \,.$$
We say that an economy with endowments $(e^i)_{i \in \mathbb I}$ does not allow for implementation if there is no Arrow--Debreu equilibrium $(c, \psi \cdot P)$ which can be implemented as a Radner equilibrium. Let $\cR$ be the subset of economies in $\cK$ which do not allow for implementation.

In Theorem \ref{ThmRad}, we introduced the notion mean ambiguity--free random variables in $\mathbb{H}$. The collection of all such elements is denoted by $\mathbb{M}$ and set  $\mathbb{M}^c= \mathbb{H}\setminus  \mathbb{M}$. For an alternative characterization of $\mathbb{M}$, see Corollary \ref{MRTCor}.

\begin{theorem}\label{ThmShy}
With no aggregate uncertainty,  implementation  of  Arrow--Debreu equilibrium  via a Bachelier Model is  generically impossible: $\cR$ is prevalent in $\cK$.
\end{theorem}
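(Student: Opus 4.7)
The plan is to characterize $\cR$ exactly using the structure from the previous sections, and then to show that its complement is shy in $\cK$ by exhibiting a one--dimensional transverse probe in the sense of \cite{AndersonZame01}, along which $\cK\setminus\cR$ meets each translate in at most a single point.

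\emph{Step 1: Characterize $\cR$.}  With no aggregate uncertainty, Corollary \ref{EffCor} and Theorem \ref{ThmEquilibrium} imply that every Arrow--Debreu equilibrium $(c_\alpha,\psi_\alpha\cdot P)$ is a full insurance allocation, so each $c_\alpha^i$ is constant (quasi--surely). The first--order condition $\psi_\alpha=\alpha^i (u^i)'(c_\alpha^i)$ then forces the state--price density $\psi_\alpha>0$ to be constant as well. Hence the net trade value $\xi^i=\psi_\alpha(c_\alpha^i-e^i)$ differs from $-\psi_\alpha e^i$ only by an additive constant. Since $\mathbb M$ is a linear subspace containing the constants, $\xi^i\in\mathbb M$ if and only if $e^i\in\mathbb M$. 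This criterion is independent of the weights $\alpha\in\Delta$ and the chosen prior $P\in\cP$, so Theorem \ref{ThmRad} gives
\begin{equation*}
\cK\setminus\cR \;=\; \cK\cap\mathbb M^{\mathbb I}\;=\;\{(e^i)_{i\in\mathbb I}\in\cK:\ e^i\in\mathbb M\ \text{for all}\ i\in\mathbb I\}.
\end{equation*}

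\emph{Step 2: Exhibit a transverse direction.}  Under genuine volatility ambiguity, $\mathbb M$ is a proper closed subspace of $\mathbb H$; for instance $v^1:=\cos(B_T^1)$ is bounded and quasi--continuous, and $E^P v^1$ coincides with $\exp(-\tfrac12(\sigma^1)^2 T)$ under a constant--volatility prior with first coordinate $\sigma^1\in[\underline\sigma^1,\overline\sigma^1]$, which genuinely varies when $\underline\sigma^1<\overline\sigma^1$. Hence $v^1\notin\mathbb M$. Set
\begin{equation*}
v \;=\; (v^1,-v^1,0,\ldots,0)\;\in\;\mathbb H^{\mathbb I}.
\end{equation*}
By construction $\sum_i v^i=0$, so $v$ is tangent to the affine market--clearing constraint defining $\cK$, and the compact segment $\{tv:t\in[0,1]\}$ lies in the cone $\cK-\cK$ of admissible directions.

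\emph{Step 3: Verify shyness.}  Let $\mu$ be the pushforward of Lebesgue measure on $[0,1]$ onto $\{tv:t\in[0,1]\}$; this is a compactly supported Borel probability measure on $\mathbb H^{\mathbb I}$. For any fixed $x\in\mathbb H^{\mathbb I}$, the condition $x+tv\in\mathbb M^{\mathbb I}$ entails in particular $x^1+tv^1\in\mathbb M$, i.e.
\begin{equation*}
t\,\bigl(E^P v^1-E^Q v^1\bigr)\;=\;E^P x^1-E^Q x^1\qquad\text{for all}\ P,Q\in\cP.
\end{equation*}
Choosing any pair $P,Q$ with $E^P v^1\ne E^Q v^1$ pins $t$ down uniquely, so $\{t\in[0,1]:x+tv\in\cK\setminus\cR\}$ has at most one element and in particular $\mu$--measure zero. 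This is the Anderson--Zame transversality criterion, so $\cK\setminus\cR$ is shy in $\cK$ and $\cR$ is prevalent. Measurability of $\cR$ is immediate: $\mathbb M=\bigcap_{P,Q\in\cP}\ker(\xi\mapsto E^P\xi-E^Q\xi)$ is norm--closed in $\mathbb H$, so $\cK\setminus\cR$ is closed in $\cK$ and $\cR$ is relatively open.

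The main obstacle is Step 2: one has to pick a single direction $v$ that simultaneously lives in the commodity space $\mathbb H$ (bounded and quasi--continuous), respects the market--clearing constraint so that the probe stays tangent to $\cK$, and has at least one coordinate outside $\mathbb M$. The antisymmetric choice $(v^1,-v^1,0,\ldots,0)$ with a bounded continuous function of $B_T^1$ whose expectation genuinely depends on the volatility handles all three requirements at once, and everything else is bookkeeping in the Anderson--Zame framework.
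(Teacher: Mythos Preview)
Your proof is correct and follows a genuinely different --- and more elementary --- route than the paper's.

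\textbf{Comparison.} Both arguments begin identically: under no aggregate uncertainty the equilibrium allocation and state--price density are constant, so by Theorem~\ref{ThmRad} implementability reduces to $e^i\in\mathbb M$ for all $i$, i.e.\ $\cK\setminus\cR=\cK\cap\mathbb M^{\mathbb I}$. From there the paper and you diverge. The paper invokes the $G$--martingale representation (Theorem~\ref{MRT}) to embed $\mathbb H$ into the process space $\mathcal M\times\mathbb{BV}$, identifying $\mathbb M$ with the slice $\mathcal M\times\{0\}$; the one--dimensional probe is then built from an explicit increasing $\EE$--martingale $K_t=\sum_k\langle B^k\rangle_t-\tfrac{t}{2}(\overline\sigma^k)^2$, i.e.\ from the ``$K$--part'' of the representation. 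You instead stay in the payoff space and exhibit directly a bounded quasi--continuous element $v^1=\cos(B_T^1)\notin\mathbb M$, then use the antisymmetric vector $(v^1,-v^1,0,\ldots,0)$ as a probe tangent to the market--clearing constraint. Since $\mathbb M$ is a closed linear subspace, each translated line meets it in at most one point, and Anderson--Zame shyness follows.

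\textbf{What each approach buys.} Your argument is shorter, avoids the martingale--representation machinery entirely, and makes transparent that the result rests on nothing more than $\mathbb M$ being a \emph{proper} closed subspace; it would transfer verbatim to any ambiguity model in which one can write down a single payoff whose mean depends on the prior. The paper's route, by contrast, ties the non--implementability explicitly to the $K$--term in the $G$--martingale decomposition, which is conceptually illuminating (it is precisely this term that obstructs spanning) and yields the side remark that $\mathbb M$ is shy in $\mathbb H$ via the same process--level construction.

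\textbf{One small point.} Your verification of the Anderson--Zame compatibility condition is a little compressed: saying the segment lies in $\cK-\cK$ is not quite the stated criterion. It suffices to note that for the equal split $c=(e/I,\ldots,e/I)\in\cK$ and any $\gamma>0$, the rescaled segment $\{tv:t\in[0,\epsilon]\}$ lies in $\gamma(\cK-c)$ once $\epsilon$ is small enough (since $|v^1|\le 1$ and the coordinates of $\cK-c$ are bounded below by $-e/I$), so the probe can be localized near any point of $\cK$. This is routine, but worth one line.
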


\begin{proof} Let $(c_\alpha, \psi\cdot P)$ be an equilibrium and let $e=1$ without loss of generality. By Corollary \ref{EffCor},   $e^i$ is the only non--constant within the value of  each net trade $\xi^i=\alpha_i  (u^i)'\left(c^i_\alpha \right)  \left(c^i_\alpha-e^i\right)$.  Consequently by Theorem \ref{ThmRad}  implementability fails if there is a $i\in\mathbb{I}$ with  $e^i\in \mathbb{M}^c$. 

From this observation, we  may  focus  on the prevalence of the property   ``mean ambiguity--free" within   the space of initial endowments $\mathcal{K}$. The following  claim is crucial; its proof is below. The order relation  $\leq$  induced by the cone $\HH_+$.
\\[.5em]
\textit{Claim:} $\mathbb{M}^c \cap[0,1]$  is a  prevalent set in $[0,1]=\{Y\in \mathbb{H}: 0\leq Y\leq 1\}$ of $\mathbb{H}$.
\\[0.5em]
In the case $I= 2$,   one  endowment within $[0,1]$  determines a distribution of endowments via $e^2=e-e^1$, hence
 $   \mathcal{K}\cap \mathbb{M}^\mathbb{I}$ is a  shy  subset of $ \mathcal{K}$. Therefore the claim implies that  $\mathcal{R}$ is prevalent in $\mathcal{K}$.

 For an arbitrary $\mathbb{I}$, we have $ (\mathbb{M}^c\cap [0,1])^\mathbb{I}$ is  a prevalent subset in $[0,1]^\mathbb{I}$ of $\mathbb{H}^\mathbb{I}$. This follow by the the same arguments as in the proof of the claim, by choosing the  subspace $T^\mathbb{I}$ of $\mathbb{H}^\mathbb{I}$  as the finite dimensional test space.
Hence, $\mathbb{M}^c\cap [0,1]\times (\mathbb{M}\cap [0,1])^{\mathbb{I}-1}$ is a shy set in $ [0,1]^\mathbb{I}$.  
The result then follows by an analogous argument as in the case $I=2$. 

We come now to the proof of the claim made above.
 The proof  relies on the Martingale representation Theorem \ref{MRT}. The  Corollary \ref{MRTCor} implies $M^1:=\mathbb{M} \cap[0,1]= \{(\eta, 0)\in \mathcal{M}\times \{0\}: \int_0^T \eta_t\textnormal{d}B_t \in [0,1]\}$,  where $ \mathcal{M}$ is the completion of piecewise constant process, see (\ref{Theta}) in Appendix A.3.

Let  $\mathbb{BV}$ denote the Banach space of progressive--measurable processes with continuous paths and of  bounded variation on $[0,T]\times \Omega$, so that   $$(\mathcal{M} \times \mathbb{K}_0)\cap [0,1]=:M\!K^1\subset X:=  (\mathcal{M}\times \mathbb{BV})\,,$$where $\mathbb{K}_0$ denotes all processes $  x+\hat K$ such  that $x\in \mathbb{R}$ and  $\hat K\in\mathbb{K}^M$.\footnote{Akin to (\ref{K}), set  
$
\mathbb{K}^M\!\!=\!\big\{
\!\!-K\!:\E\textit{--martingale},  K_0=0,  \textit{continuous, incr., }\: \mathbb{E}\sup_{t} K_t^2 <\infty \big\}$.}

The main step is to define a tractable ``test--space"  $T$  (to check prevalence) via  a concrete  $ K$: By  Remark \ref{MRTRem} and  setting $\varphi_t\equiv 1\in\mathbb{R}^d$, fix $K^1\in \mathbb{K}^M$ given by
\begin{eqnarray*}
 K^1_t= \int_0^t 1 \textnormal{d}\langle B\rangle_r - \int_0^t G( 1) \textnormal{d}r
=   \sum_{k=1}^d \langle B^k\rangle_t - \frac{t}{2}  \left(\overline \sigma^k\right)^2\, .
\end{eqnarray*}
The process $(-K^*_t)=-(x+ K^1_t)\in \mathbb{K}_0$ is  again an increasing $\EE$--martingale since $\EE$ preserves constants.  Positive homogeneity of $\EE$ implies that  $-aK^*$ is an $\EE$-martingale if $a\geq 0$.  A  rescaling of  $ K^*$  into $K$ yields $ K\in  \mathbb{K}_0\cap [0,1]$.\footnote{This is possible  since  $\min_k \left(\sigmaunten^k\right)^2 t \le \langle B^k\rangle_t \le \max_k \left(\sigmaoben^d\right)^2 t$ as discussed in the Appendix for the the case $d=1$.}

Fix the following one dimensional test space  $$T=\Big\{(0,   -aK): a\in \mathbb{R}\Big\} $$  of $X$ and denote the Lebesgue measure on $T$ by $\lambda_T$;  we have to check:
\begin{enumerate}
\item \textit{There is a $ c\in X$ with $\lambda_T(M\!K^1+c)>0$}:

  The arbitrary translation of $M\!K^1$ is performed by $c=(0,0)\in X $.
Since  only a positive $a\in\mathbb{R}$ makes $-aK$  an $\E$--martingale, we derive   
   $\lambda_T(M\!K^1)\geq\lambda_T( (0,  -aK): a\in [0,1])=\lambda_\mathbb{R}([0,1])=1>0$.

\item \textit{For all $ z\in X$ we have $\lambda_T(E+z)=0$}: 

This follows directly from  the definition of $T$ and $M^1$, since  at most one $K\in M\!K^1$ lies in  $M^1 +z$. The condition follows, since  $\lambda_T(\{0, K\})=0$ for every $K\in \mathbb{K}$.
\end{enumerate}
By Fact 6 of  \cite{AndersonZame01} every finitely shy set in $ M\!K^1$ is  also a shy set in $ M\!K^1$, and therefore $ \mathbb{M}^c\cap [0,1]= \mathbb{H}\cap [0,1]\setminus \mathbb{M}\cap [0,1 ]\cong M\!K^1\setminus M^1$ is a prevalent subset of $M\!K^1$. 
\end{proof}

\begin{remark}
The  proof of Theorem \ref{ThmShy}   also establishes the general fact that
$ \mathbb{M}$ is a shy  subset  in $\mathbb{H}$.
\end{remark}

\subsection{General Asset Structures}
The Bachelier model we presented allows for negative values of the price process. Theorem \ref{ThmRad} is still valid, when  our $\mathbb{E}$-Brownian motion $B=B^+ +B^-$ of the Bachelier model is decomposed into  the positive $B^+$ and negative part $B^-$. The trading strategies are then  given by $\theta^{i,+}_t=\theta^i_t1_{\{B_t\geq 0\}}$ and $\theta^{i,-}_t= -\theta^i_t1_{\{B_t< 0\}}$ where $\theta^i$ denotes the fractions invested in the uncertain assets of Theorem \ref{ThmRad}. In the same fashion, as mentioned in Section 5 of \cite{DuffieHuang85}, the number of assets becomes $2\cdot d +1$.

On the other it is also possible to implement the Arrow-Debreu equilibrium with an arbitrary symmetric $\mathbb{E}$--martingale. A canonical example is  a symmetric $\E$--martingale of exponential form: 
$$ M_t=\exp \left( \int_0^t \eta_s \textnormal{d}B_s + \frac{1}{2}\int_0^t \eta^2_s   \textnormal{d}\langle B\rangle_s \right)$$
The strictly positive process $(M_t)$ is the unique solution of the following linear stochastic differential equation with respect to the $\EE$--Brownian motion $B$,\footnote{ See  Section 5 in  \cite{pe10} for details and especially Remark 1.3 therein.} $$ \frac{ \textnormal{d}M_t}{ M_t}=\eta_t  \textnormal{d}B_t,\quad M_0=1.$$
From a general perspective, we aim to enlarge the scope of Theorem \ref{ThmRad} by allowing a large family of symmetric $\EE$-martingales as feasible substitutes of the Bachelier Model.
The following result takes a leaf out of \cite{duffie1986stochastic}, where the notion of ``martingale generator" points to the implementability of an Arrow--Debreu Equilibrium under some classical probability space.
\begin{proposition}
Theorem \ref{ThmRad}. is still valid if we replace the implementing $B$ with a symmetric $\mathbb{E}$--martingale $M_t=M_0+ \int_0^t V_t \textnormal{d}B_t$, such that  $V_t\neq 0$ $\mathcal{P}$--q.s.
and $V_t\in L^\infty_\mathcal{P}$.
\end{proposition}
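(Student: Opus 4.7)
The strategy is to mirror the proof of Theorem \ref{ThmRad}, exploiting the fact that $M$ and $B$ generate the same space of symmetric $\mathbb{E}$--martingales because $\textnormal{d}M_t = V_t\,\textnormal{d}B_t$ with $V_t\neq 0$ quasi--surely. For the \emph{necessity} direction, suppose a Radner implementation with asset $M$ and strategies $\tilde\theta^i$ exists. By associativity of the stochastic integral under volatility uncertainty, the net trade satisfies
\[
\xi^i \;=\; \int_0^T \tilde\theta^i_t\,\textnormal{d}M_t \;=\; \int_0^T \tilde\theta^i_t V_t\,\textnormal{d}B_t\qquad \mathcal{P}\text{-q.s.}
\]
Since stochastic integrals against $B$ are symmetric $\mathbb{E}$--martingales, $E^P\xi^i = \mathbb{E}\xi^i$ for every $P\in\mathcal{P}$, so the net trades are mean ambiguity--free.

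For the \emph{sufficiency} direction, assume every $\xi^i$ is mean ambiguity--free. As in the proof of Theorem \ref{ThmRad}, the martingale representation theorem \ref{MRT} together with Corollary \ref{MRTCor} yields a Bachelier strategy $\theta^i\in\Theta(S)$ with $\xi^i = \int_0^T \theta^i_t\,\textnormal{d}B_t$. Define the candidate $M$--strategy
\[
\tilde\theta^i_t \;:=\; \frac{\theta^i_t}{V_t},
\]
which is well--defined $\mathcal{P}$--q.s. since $V_t\neq 0$ q.s. Associativity of the stochastic integral again gives
\[
\int_0^T \tilde\theta^i_t\,\textnormal{d}M_t \;=\; \int_0^T \frac{\theta^i_t}{V_t}V_t\,\textnormal{d}B_t \;=\; \xi^i,
\]
so $(c^i,\tilde\theta^i)$ satisfies the Radner budget constraint with spot price $\psi$. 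Market clearing follows immediately: $\sum_{i\in\mathbb{I}}\tilde\theta^i_t = V_t^{-1}\sum_{i\in\mathbb{I}}\theta^i_t = 0$ q.s., since the Bachelier strategies clear by Theorem \ref{ThmRad}. Optimality is inherited from Arrow--Debreu as before: any $(d,\tilde\eta^i)$ budget--feasible in the $M$--market produces $\psi(d-e^i)=\int_0^T \tilde\eta^i_t V_t\,\textnormal{d}B_t$, whose expectation under every $P\in\mathcal{P}$ equals $\mathbb{E}\int_0^T \tilde\eta^i_t V_t\,\textnormal{d}B_t = 0$; hence $d$ is Arrow--Debreu budget--feasible for agent $i$, giving $U^i(d)\leq U^i(c^i)$.

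The principal technical obstacle is to verify that $\tilde\theta^i=\theta^i/V$ is admissible as an integrand against $M$. The bound $V\in L^\infty_\mathcal{P}$ does not by itself rule out degeneracy of $1/V$ on sets of small capacity, so one cannot invoke $L^\infty$--boundedness of $\tilde\theta^i$ directly. However, admissibility for $M$ only requires a square--integrability condition against $\textnormal{d}\langle M\rangle = V_t^2\,\textnormal{d}\langle B\rangle_t$, and this reduces to
\[
\mathbb{E}\int_0^T (\tilde\theta^i_t)^2\,\textnormal{d}\langle M\rangle_t
\;=\; \mathbb{E}\int_0^T (\theta^i_t)^2\,\textnormal{d}\langle B\rangle_t,
\]
which is precisely the admissibility condition already satisfied by $\theta^i\in\Theta(S)$ from the martingale representation step. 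Closing this admissibility argument rigorously within the framework of \cite{EpsteinJi13} and \cite{DenisHuPeng11} is the only nontrivial point; everything else is a one--to--one translation of the Bachelier proof.
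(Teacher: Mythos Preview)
Your proof is correct and follows the same route as the paper's: reduce the $M$--market to the Bachelier market via the substitution $\tilde\theta=\theta/V$ and the associativity $\int\tilde\theta\,\textnormal{d}M=\int\tilde\theta V\,\textnormal{d}B$, then invoke the argument of Theorem~\ref{ThmRad} verbatim. Your handling of admissibility---observing that the $M$--norm of $\theta/V$ coincides with the $B$--norm of $\theta$ because $\textnormal{d}\langle M\rangle=V^2\,\textnormal{d}\langle B\rangle$---is in fact cleaner than the paper's, which simply asserts $\theta/V\in\mathcal{M}$ at the end without isolating this cancellation.
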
 
\begin{proof}We show that every $\int H \textnormal{d}B $ can be written as a $\int  H^M \textnormal{d}M$.

 By Proposition 3.3 in \cite{soner2011martingale} states that under every $P\in\mathcal{P}$,  the It\^o integral $\int H\textnormal{d}B$ with respect to some $H\in \mathcal{M}$, coincides $P$-almost surely with  the stochastic integral under $(\Omega,\mathbb{H},\EE)$.

By using the representation of $\EE$ from Proposition \ref{PropRep},  we have $\langle M\rangle^P= \int V^2 \textnormal{d}\langle B\rangle^P$ $P$-almost surely, where $\langle B\rangle^P$ denotes the quadratic variation of the $\EE$-Brownian motion $B$ under $P$, since $B$ is an $E^P$-martingale under every $P\in\mathcal{P}$. Since $V$ is bounded, we have $\Vert \eta\Vert_\mathcal{M}<\infty$ if and only if  
\begin{eqnarray*}
  \sup_{P\in\mathcal{P}}E^P\int_0^T \!\!\eta_t^2 \textnormal{d}\langle M \rangle^P_t<\infty  
\end{eqnarray*}
Consequently, a  stochastic integral with respect to the present symmetric $\mathbb{E}$--martingale $M= \int V \textnormal{d}B$ can then be written as
$\int_0^t \theta_s\textnormal{d}M_s= \int_0^t \theta_s\textnormal{d}\int^s_0 V\textnormal{d}B =  \int_0^t \theta_sV_s\textnormal{d}B_s,$
where $\theta\in \mathcal{M}$, see (\ref{Theta}) for the exact of $\mathcal{M}$.  The   equations hold $P$-almost surely for every $P\in\mathcal{P}$. 

From the conditions on $V$ it follows that $M$ may represent payoffs in $\mathbb{M}$ similarly to  $B$, since for every $X\in \mathbb{M}$ (with $\EE X=0$), we have $$X=\int_0^T  \theta_t  \textnormal{d}B_t=\int_0^T \frac{ \theta_t}{V_t}  V_t \textnormal{d}B_t=\int_0^T  \frac{ \theta_t}{V_t}  \textnormal{d}M_t $$
and this yields $H^M=\frac{ \theta_t}{V_t}\in \mathcal{M}$. 
\end{proof}

\begin{remark}
In finance models it is more usual to take a commodity space of  square--integrable random variables, i.e. $\E X^2 <\infty $.   As stated  in the Martingale Representation Theorem \ref{MRT},  square--integrable random variables   are included. If $\mathbb{H}=L^2_\mathcal{P}$  all results of Section 4 remain valid, see Remark \ref{L2}.
\end{remark}

\section{Conclusion}

This paper establishes a crucial difference of risk and Knightian uncertainty. Under risk, dynamic trading of  few long--lived assets 
suffices to implement the efficient allocations of Arrow--Debreu equilibria as dynamic Radner equilibria if the number of traded assets is equal to the number of sources of uncertainty. 
This result generically fails under Knightian uncertainty even in the stylized framework of no aggregate uncertainty and for nominal asset structures.

All results of the paper are formulated in terms of Peng's sub--linear expectation space $(\Omega,\mathbb{H},\E)$. As stated in Proposition \ref{PropRep}, the Knightian expectation $\E$ can be represented by a set priors that corresponds to  different volatility processes $\sigma_t(\omega)$ that live within  constant bounds $\Sigma$. A further extension refers to the possibility to  extend results when the Knightian expectation is  induced by   time--dependent  and stochastic volatility bounds.
 For instance,  \cite{EpsteinJi14} introduce  a more general  family of time--consistent  conditional Knightian  expectations. Since most results of the paper are heavily based on the Martingale representation Theorem \ref{MRT}, extensions to more general volatility specifications crucially  depend on the availability  of an analogous  martingale Representation.
\begin{appendix}

\section{\!\!Knightian Uncertainty in Continuous Time}

We fix a time horizon $T>0$. Our state space consists of all continuous trajectories on the time interval $[0,T]$ that start in zero:
$$ \Omega = C_0^d([0,T])=\left\{\omega : [0,T] \to \mathbb R^d : \omega(0)=0\right\}\,.$$
The coordinate process
$$B_t(\omega)=\omega(t)$$ 
will describe the information flow of our economy as in the usual continuous--time diffusion model. As agents live in a Knightian world, we do not assume that the distribution $P$ of the process $B$ is commonly known. 
Instead, we just use a nonlinear expectation $\EE$ which is defined on a suitably rich space $\mathbb H$ of functions on $\Omega$ in the sense of the following definition.

\begin{definition}\label{DefE}
  Let $\mathbb H$ be a vector lattice of functions from $\Omega$ to $\mathbb R$ that contains the constant functions. We call $(\Omega, \mathbb H, \mathbb E)$ an \textit{uncertainty space} if the mapping
  $$\mathbb E : \mathbb H \to \mathbb R$$ satisfies the following properties:
  \begin{enumerate}
    \item preserves constants: $\mathbb E c = c $ \quad\qquad\qquad\qquad \:for all $c \in \mathbb R$,
\\[-1.7em]
    \item  monotone: \qquad \qquad  $\mathbb  E X \le \EE Y$\quad\quad\quad \quad \qquad for all  $X, Y \in \mathbb H$, $ X \le Y$ 
\\[-1.7em]
        \item sub-additive:\qquad \quad\:\:$\EE (X+Y) \le \EE X + \EE Y \:\textit{ for all } X, Y \in \mathbb H$,
\\[-1.7em]
        \item  homogeneous: \:\quad\:\:$
\quad\EE(\lambda X) = \lambda \EE X \qquad \qquad\textit{for }\lambda>0  \textit{ and }X \in \HH$.
  \end{enumerate}
We call $\EE$ a (Knightian) expectation.
\end{definition}

  We start here with the notion of an uncertainty space rather than modeling the set of priors because we want to stress that one can build a whole new theory of \textit{uncertainty} (rather than \textit{probability} theory) by starting with the notion of an uncertainty space rather than a probability space, as the work of \cite{peng2007g2} demonstrates. Peng calls such spaces sublinear expectation spaces, but from a philosophical point of view, the name ``uncertainty space'' seems quite fitting to us.

\paragraph{Distributions under $(\Omega,\mathbb{H},\mathbb{E})$ and $\mathbb{E}$-Brownian Motion }
Of particular importance to us is the fact that one can develop the notion of Brownian motion in this Knightian setup. 
We assume throughout the paper that $B$ is an $\EE$-Brownian motion\footnote{Again, we slightly deviate from Peng's nomenclature where this object is called $G$--Brownian motion - no $G$ exists at this point.}, the ambiguous version of classic  Brownian motion on Wiener space. 

We would like to take a second to explain how such an ambiguous Brownian motion is to be understood. 
 With regard to the concept  of ambiguity free (in the strong sense)  from Definition \ref{DefNoamb},          we start how one can define the notion of a ``distribution'' of a function $X \in \HH$ in our setting. When $\HH$ is sufficiently rich (what we assume), then for every continuous and bounded function $f: \mathbb R \to \mathbb R$, $f(X) \in \HH$ as well. We can then define a new operator $$\FF^X$$ on the space $C_b(\mathbb R)$ of continuous, bounded real functions by setting
$$\FF^X(f)=\EE f(X)\,.$$ We call the operator $\FF^X$ the (uncertain) distribution of $X$. 
We consequently say that $X,Y\in \HH$ have identical uncertainty, or $ X \stackrel{d}{=} Y$,  if $\FF^X=\FF^Y$. 

The notion of independence is crucial for probability theory. We follow Peng again in letting $Y$ be ($\EE$-)independent of $X$ if for all continuous bounded functions $f: \mathbb R^2 \to \mathbb R$ we have
$$ \EE f(X,Y)= \EE \left.\EE f(x,Y)\right|_{x=X} \,.$$
One can thus  first fix the value of $X=x$, take the expectation with respect to $Y$, and then take the expectation with respect to $X$. This is one way to generalize the notion of independence to the Knightian case. Without going into the philosophical issues involved here, we just take this approach.  In the same vein, we call $Y$ independent of $X_1, \ldots, X_n \in \mathbb H$ if for all continuous bounded functions $f: \mathbb R^{n+1} \to \mathbb R$ we have
$$ \EE f(X_1, \ldots,X_n ,Y)= \EE \left.\EE f(x_1,\ldots,x_n,Y)\right|_{x_1=X_1,\ldots,x_n=X_n} \,.$$ 

The class of (normalized) normal distributions  is infinitely divisible. In particular, if we have two independent standard normal distributions, then for positive numbers $a$ and $b$, the new variable $Z=aX+bY$ is again normally distributed with variance $a^2+b^2$. This  property is well known to characterize the class of normal distributions. One can take this characterization of normal distributions to call $X\in \HH$ $\EE$--normal if for any $Y \in \HH$ which has  identical uncertainty and is independent of $X$, the uncertain variable $Z=aX+bY$ has the same uncertainty as $\sqrt{a^2+b^2} X$.

We have now the tools at hand to define uncertain $\mathbb{E}$-Brownian motion. 
$B$ is called an $\EE$--Brownian motion if all increments are independent of the past and identically $\EE$--normal: for all $s,t \ge 0$ and all $0\le t_1 \le t_2 \le \ldots \le t_n \le t$ the increment $B_{t+s}-B_t$ is independent of $B_{t_1}, \ldots, B_{t_n}$.

It is, of course, a completely nontrivial question whether such an ambiguous Brownian motion exists. This has been shown by Shige Peng (\cite{peng2007g2}) with the help of the theory of viscosity solutions of nonlinear partial differential equations. An alternative route proceeds via the construction of a suitable set of multiple priors. Indeed, readers familiar with the literature on ambiguity aversion in decision theory or the theory of risk measures in mathematical finance might immediately  anticipate a representation of  our Knightian expectation in terms of a set of probability measures. For the case of Knightian Brownian motion, the set of probability measures has a special structure that we now describe.
\subsection{Representing Priors and Volatility Uncertainty}
We start with the simpler one--dimensional case. Fix two bounds $0<\sigmaunten\le \sigmaoben$. The set $\cP^1$ consists of all probability measures $P$ on $\Omega$ endowed with the Borel $\sigma$--field that make $B$ a martingale whose quadratic variation $\qv{B}$ is $P$--almost surely between the following two bounds:
$$ \sigmaunten^2 t \le \qv{B}_t \le \sigmaoben^2 t \,.$$
In general, the set of priors $\cP^d$ can be parametrized by a subset $\Theta$ of $\mathbb R^{d \times d}$; this set describes the possible volatility structures of the $d-$dimensional Knightian Brownian motion.    Theorem 52 in  \cite{DenisHuPeng11} implies the next results.

\begin{proposition}\label{PropRep}
  For any $X\in \mathbb{H}$, we have the representation
$$\EE X = \sup_{P \in \cP^d} E^P X$$ where $E^P$ is the probabilistic expectation of $X$ under the probability measure $P$.   $\cP^d$  is a weakly--compact set, with respect to the topology induced  $C_b(\Omega)$.
\end{proposition}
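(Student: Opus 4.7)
The plan is to derive the representation from the general duality theory of sublinear expectations and then deduce weak compactness from tightness of continuous martingales with bounded quadratic variation.

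First, I would invoke Hahn--Banach on $\HH$: since $\EE$ is sublinear, monotone, and constant--preserving by Definition \ref{DefE}, for every $X\in\HH$ there is a linear functional $\ell_X\le \EE$ with $\ell_X(X)=\EE X$. Taking the supremum over all such dominated linear functionals yields a dual representation
\begin{equation*}
\EE X=\sup_{\ell\in \cL} \ell(X),
\end{equation*}
where $\cL$ is the set of positive, normalized linear functionals on $\HH$ dominated by $\EE$. The hard part is upgrading each $\ell\in\cL$ to a $\sigma$--additive probability measure on $\Omega$. For this I would use a downward--continuity property of the form $\EE X_n\downarrow 0$ whenever $X_n\downarrow 0$ in $\HH$, which in Peng's construction follows from the PDE definition of $\EE$; a Daniell--Stone argument then produces, for each $\ell$, a probability measure $P$ on $(\Omega,\cF)$ with $\ell(X)=E^P X$ on a suitable core and by extension on $\HH$.

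Second, to identify $\cP^d$ concretely, I would use that $B$ is an $\EE$--Brownian motion. The martingale property under each $P\in\cP^d$ is forced by $\EE\bigl(\pm(B_{t+s}-B_t)\mid\mathcal{F}_t\bigr)=0$ together with the independence structure introduced just before the proposition; the bounds on the quadratic variation encode the $\EE$--normality of the increments, by comparing $\EE\varphi(B_t)$ with the viscosity solution of the $G$--heat equation associated with $\Theta$. This pins down $\cP^d$ as the set of martingale laws for $B$ whose quadratic variation density takes values in $\Theta$.

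Third, for weak compactness I would argue in two steps. Tightness on $C_0^d([0,T])$ follows from uniform Burkholder--Davis--Gundy bounds: each $P\in\cP^d$ makes $B$ a continuous martingale with $\langle B^k\rangle_t\le (\sigmaoben^k)^2 t$, so $E^P\|B_t-B_s\|^4\le C|t-s|^2$ uniformly in $P$, and Kolmogorov--Centsov yields tightness. Closedness follows because the defining conditions---$B$ a $P$--martingale with quadratic variation density in $\Theta$---pass to weak limits of probability measures on $C_0^d([0,T])$. Combining tightness and closedness gives weak compactness in the topology induced by $C_b(\Omega)$.

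The main obstacle I anticipate is the $\sigma$--additivity step in the first part: without a downward--continuity axiom on $\EE$, the Hahn--Banach dual functionals are only finitely additive. One must either impose such a continuity property directly or, as in Peng's PDE construction, derive it from the specific analytic structure of $\EE$ acting on $\HH$. This is exactly what Theorem 52 of \cite{DenisHuPeng11}, to which the paper defers, accomplishes within the volatility--uncertainty framework, and where I would expect to do the most delicate work rather than inventing a new route.
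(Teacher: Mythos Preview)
The paper does not supply its own proof of this proposition: it simply states, just before the proposition, that ``Theorem 52 in \cite{DenisHuPeng11} implies the next results,'' and leaves it at that. Your proposal is a reasonable reconstruction of the argument underlying that cited theorem---Hahn--Banach duality for the sublinear $\EE$, a Daniell--Stone step to upgrade finitely additive functionals to $\sigma$--additive measures via a downward continuity property, identification of the resulting priors as martingale laws with constrained quadratic variation, and tightness/closedness for weak compactness---and you correctly identify the $\sigma$--additivity step as the delicate point and the place where one ultimately leans on the analytic structure established in \cite{DenisHuPeng11}. In short, your sketch and the paper's approach coincide: both defer the substance to Denis--Hu--Peng, and you have accurately outlined what that deferred work consists of.
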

Wwe give a more detailed description of the lattice $\HH$ of payoffs we are working with  and set for the rest of the appendix $\mathcal{P}=\mathcal{P}^d$. Peng constructs the $\mathbb{E}$--Brownian motion first on the set of all locally Lipschitz functions of $B$ that satisfy a polynomial growth constraint. The space $\HH$ is obtained by closing this space under the norm  
$$ \| X\|_{\infty}=\inf\{    M\geq 0:| X|  \leq M\:\: \mathcal{P}\textit{-q.s.}\},$$
where $\mathcal{P}$-q.s. refers to $P$-almost surely for every $P\in\mathcal{P}$.
As indicated in the introduction and Section 2, we may  formulate the   uniform version of Luisin's property:   A mapping $X:\Omega\rightarrow\mathbb{R}$ is said to be
quasi-continuous (q.c.) if for all $\epsilon
> 0$  there exists an open set $O$ with $c(O)=\sup_{P\in\mathcal{P}}P(O) < \epsilon$ such
that $X|_{\Omega\setminus O}$ is continuous.

Similarly to Lebesgue spaces based on a probability space, we restrict attention to equivalent classes.   Under $\mathbb{E}$, as shown in \cite{DenisHuPeng11},  we have the following  representation of our commodity space
\begin{eqnarray*}
L^\infty_\mathcal{P}=\left\{X\in  \mathcal{L}:X\text{ has a q.c. version and }  \Vert X\Vert_\infty <\infty
\right\}
\end{eqnarray*}
where $\mathcal{L}$ denotes the space of $\mathcal{N}$-equivalence classes   of   measurable payoffs and  $\mathcal{N}:=\{ X\:\: \mathcal{F}\textit{-measurable and }X= 0 \:\:  \mathcal{P}\textit{-q.s.}\}$ are the trivial payoffs with respect to $\mathcal{P}$  that  do not charge any $P\in\mathcal{P}$.  We say that $X$ has a $\mathcal{P}$-\textit{q.c. version}  if
there is  a quasi--continuous function
$Y:\Omega\rightarrow\mathbb{R}$ with $X = Y$ q.s.

\begin{remark}\label{L2}
Instead of $\Vert\cdot\Vert_\infty$, one may take  $\Vert\cdot\Vert_2=(\mathbb{E}\vert X\vert^2)^{\frac{1}{2}}$ and establish with Theorem 25 in \cite{DenisHuPeng11} that  the space $L_\mathcal{P}^2$, the completion of  $C_b(\Omega)$ under $\Vert\cdot\Vert_2$ is given by 
$$L^2_\mathcal{P}=\left\{X\in  \mathcal{L}:X\text{ has a q.c. version, }  \Vert X\Vert_2 <\infty\:, 
\lim_{n\rightarrow\infty}\mathbb{E}\vert X\vert 1_{\{\vert X\vert >n\}}=0
\right\}.$$ 
The results of the remaining appendix holds also under $L_\mathcal{P}^2$, so that Theorem \ref{ThmRad} and Theorem \ref{ThmShy} are still valid.
\end{remark}

\subsection{Conditional Knightian Expectation}
 For the purpose of a martingale representation theorem we need a  well--defined   conditional   expectation. In accordance with the representation of $\mathbb{E}$ in Proposition \ref{PropRep}, we denote for each $P\in \cal P$, the conditional probability by $P_t=P(\cdot\vert \mathcal{F}_t) $. Fix a volatility regime $(\sigma_t)\in [\sigmaunten, \sigmaoben]$ and denote the resulting martingale law by $P^\sigma$. The set of priors with a  time--depending restriction on the related information set $\mathcal{F}_t$, generated by  $(B_s)_{s\leq t}$, is given by
\begin{align*}
\mathcal{P}_{t,\sigma}=\big\{ P\in \mathcal{P} : P_t=P^\sigma_t\textnormal{ on }\mathcal{F}_t\big\}.
\end{align*}
This set of priors consists of all extensions of $P^\sigma_t$ from $\mathcal{F}_t$ to $\mathcal{F}_T$ within $\mathcal{P}$. All priors in $\mathcal{P}_{t,\sigma}$ agree with $P^\sigma$ in the events up to time $t$, as illustrated in Figure 1.
\begin{figure}[!h]
	\centering
		\includegraphics[width=.65\textwidth]{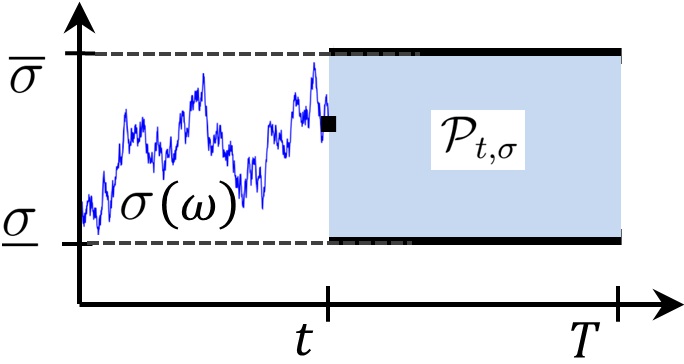}
\\[-.5em]
\caption{The representing priors of a conditional sub--linear expectation } 
\end{figure}
As we  are seeking for a rational--updating principle, we
note, the following formulation of conditiong is closely related to dynamic consistency or rectangularity of \cite{EpsteinSchneider03}. 

 The efficient use of information is commonly formalized  by the concept of  conditional expectations and   depends on the underlying uncertainty model. 
 We introduce a  \textit{universal} conditional expectation, that is under  every prior almost surely equal to the  maximum of  relevant conditional expectations. This concept is formulated in the following. 

Let  $L^2_{t,\mathcal{P}}\subset  L_\mathcal{P}^2$ denote the  subspace of $\mathcal{F}_t$--measurable payoffs.
For all $X\in L_\mathcal{P}^2$ there exists an $\mathcal{F}_t$-measurable random variable $ \mathbb{E}_tX\in L_{t,\mathcal{P}}^2$  such that
\begin{eqnarray*}
\mathbb{E}_tX= \sup_{P\in\mathcal{P}_{t,\sigma}}{E}^{P}_tX,\quad P^\sigma  \textnormal{-a.s. } \textnormal{ for every }
P^\sigma \in \cal P .
\end{eqnarray*} 
 The linear conditional expectation ${E}^{P}_t$ under some  $P$ has strong connections to a positive linear projection operator. In the presence of multiple priors, the conditional updating in an ambiguous environment involves  a sub--linear projection  $\mathbb{E}_t:L^2_\mathcal{P}\rightarrow L^2_{t,\mathcal{P}} $.  
 In this regard  the  conditional Knightian expectations satisfies  a rational--updating principle,  with $\mathbb{E}_0=\mathbb{E}$.  
\begin{lemma}\label{lem}
$(\mathbb{E}_t)$ meet the law of iterated expectation:  $\mathbb{E}_s\circ \mathbb{E}_t=\mathbb{E}_s$, $s\leq t$.
\end{lemma}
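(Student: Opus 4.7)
The plan is to prove the two inequalities $\mathbb{E}_s(\mathbb{E}_t X) \geq \mathbb{E}_s X$ and $\mathbb{E}_s(\mathbb{E}_t X) \leq \mathbb{E}_s X$ separately. For the easy direction: fix any $P \in \mathcal{P}_{s,\sigma}$ and observe that, by the defining supremum representation of $\mathbb{E}_t$, we have $\mathbb{E}_t X \geq E^P_t X$ holding $P$-a.s.\ (since $P$ itself belongs to the family of priors relevant at time $t$ under its own volatility regime). Applying the classical linear conditional expectation $E^P_s$, monotonicity combined with the ordinary tower property yields $E^P_s(\mathbb{E}_t X) \geq E^P_s(E^P_t X) = E^P_s X$, and taking the supremum over $P \in \mathcal{P}_{s,\sigma}$ gives $\mathbb{E}_s(\mathbb{E}_t X) \geq \mathbb{E}_s X$ holding $P^\sigma$-a.s.

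The reverse inequality is the substantive content of dynamic consistency. My strategy is: given $\varepsilon > 0$ and an arbitrary $Q \in \mathcal{P}_{s,\sigma}$, perform a measurable selection producing, for $Q$-a.e.\ $\omega$, a prior $R^\omega$ with $E^{R^\omega}_t X \geq \mathbb{E}_t X(\omega) - \varepsilon$. A \emph{pasting} operation then glues $Q$ (used on $\mathcal{F}_t$) with the family $\{R^\omega\}_\omega$ (used as regular conditional probability given $\mathcal{F}_t$) into a single measure $\tilde R$. The crucial structural fact is that $\mathcal{P}$ is stable under such pasting, which is immediate from the representation of priors as martingale laws whose covariation process takes values in the \emph{constant} rectangle $\Sigma$: concatenating two volatility paths that both stay in $\Sigma$ again stays in $\Sigma$. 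Consequently $\tilde R \in \mathcal{P}_{s,\sigma}$, and by the classical tower property under $\tilde R$ one has
\[
E^{\tilde R}_s X \;=\; E^{\tilde R}_s\bigl(E^{\tilde R}_t X\bigr) \;=\; E^Q_s\bigl(E^{\tilde R}_t X\bigr) \;\geq\; E^Q_s(\mathbb{E}_t X) - \varepsilon,
\]
where the middle equality uses that $\tilde R$ agrees with $Q$ on $\mathcal{F}_s \subset \mathcal{F}_t$. Taking the supremum over $\tilde R$ and then letting $\varepsilon \downarrow 0$ and supping over $Q$ delivers $\mathbb{E}_s(\mathbb{E}_t X) \leq \mathbb{E}_s X$ $P^\sigma$-a.s.

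The genuine obstacle is the measurable selection / pasting step: the $\omega$-dependent choices $R^\omega$ must assemble into a bona fide probability measure on path space while respecting the volatility bounds $\Sigma$. Under the present constant-bound assumption this stability is automatic and can be made self-contained via a standard regular-conditional-probability construction on $C_0^d([0,T])$, or one may appeal to the dynamic programming principle underlying Peng's construction of $\mathbb{E}$ as a viscosity solution of the $G$-heat equation, in which case the tower property is essentially built into the definition of the conditional sublinear expectation. Everything else reduces to classical identities applied prior-by-prior.
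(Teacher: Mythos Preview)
The paper does not supply a proof of this lemma; it is stated without argument in the appendix and treated as a known property of Peng's conditional $G$-expectation (the surrounding sentence points to Proposition~16 of \cite{peng2007g2} for related facts). Your two-inequality strategy is the standard route for establishing time consistency in multiple-prior settings and is correct in outline. The easy direction is clean. For the reverse direction you correctly isolate the essential structural input---stability of $\cP$ under pasting at a deterministic time, which here follows because the volatility constraint is the \emph{constant} rectangle $\Sigma$, so concatenated volatility paths remain admissible---and you also correctly flag the measurable-selection step as the only genuinely technical point. In a fully rigorous write-up one handles that step either by discretizing the supremum over a countable dense family in the weakly compact set $\cP$ and choosing $\varepsilon$-optimizers on a measurable partition of $\Omega$, or by invoking the dynamic programming principle built into Peng's PDE construction, as you yourself note. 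Either route closes the argument; the paper simply imports the result from the $G$-expectation literature rather than reproving it.
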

Moreover, as shown in  Proposition 16 of \cite{peng2007g2}, every $\EE_t$ satisfies the properties of Definition \ref{DefE} now in the conditional sense, while the property constants preserving extends  to  $\EE_tX=X$ for every $X\in L_{t,\mathcal{P}}^2$.

\subsection{Spanning and  Martingales} 
We proceed similarly to the single prior case, where the  Radner implementation in continuous time is based on a  classical martingale representation. As indicated in Proposition \ref{PropRep}, the multiple prior model enforces  a conditional sub--linear expectation  and  spawns an   elaborated martingale representation. 

We start with a notion of  martingales under the conditional expectation $\mathbb{E}_t$.  
Fix a random variable $X\in L^2_\mathcal{P}$.
As stated in Lemma \ref{lem}, the time consistency of  the  conditional Knightian expectation allows to define a martingale similarly to the  single prior setting,
as being its own estimator.
\begin{definition}
An $(\mathcal{F}_t)$-adapted process $(X_t)$ is an $\mathbb{E}$-\textnormal{martingale} if
\begin{align*}
X_s =  \mathbb{E}_sX_t\quad \mathcal{P}\textit{-q.s.}\quad\textit{for all }s\leq t.
\end{align*}
We call $X$ a \textnormal{symmetric} $\mathbb{E}$-\textnormal{martingale} if $X$ and $-X$ are both  $\mathbb{E}$-martingales.
\end{definition}
The nonlinearity of the conditional expectation implies that if  $(X_t)$ is an $\mathbb{E}$--martingale, then $-X$
is not necessarily an $\mathbb{E}$--martingale. Intuitively, the negation let $\mathbb{E}$  become  super--additive.

We come now to the representation of $\mathbb{E}$-martingales and specify  the space of admissible integrands $\Theta(S)$ taking values in $\mathbb{R}^d$. All processes we consider are $(\mathcal{F}_t)$-progressively measurable.\footnote{The filtration $\mathcal{F}_t$, deviates from the well--known two--step augmentation procedure from the stochastic analysis literature, i.e. including the null--sets and taking the right continuous version. Usually this new filtration is said to satisfy the ``usual conditions". As mentioned  in  Section 2 of \cite{soner2011martingale}, this assumption is no longer required. Consequently, the usually questionable assumption of a too rich information structure at time $0$ can be dropped. }
We begin with the space of well--defined integrands when the Bachelier model builds up the integrator:
\begin{eqnarray}\label{Theta}
\Theta(B)=\Big\{\eta\in \mathcal{M}: \eta_t \textit{ satisfies } (\ref{1}) \Big\},
\end{eqnarray}
where
$ \mathcal{M}$ is closure of  piecewise constant progressively measurable processes $\sum_{k\geq 0} \eta_{t_k}1_{[t_k,t_{k+1})}$ with  $\eta_{t_k}\in L_\mathcal{P}^2$,  under the norm    $\Vert \eta\Vert_\mathcal{M}= \mathbb{E}\int_0^T\eta_t^2 \textnormal{d}\langle B \rangle_t$, see Remark \ref{L2}

Condition (\ref{1})  states the  self--financing property  of an adapted and $B$-integrable  $\eta$.
 For the arguments in Theorem \ref{ThmRad},  it is crucial  if a payoff $X\in L^2_\mathcal{P}$ can be represented or replicated  in terms of a  stochastic integral.
To formulate Theorem \ref{MRT}, set
\begin{eqnarray}\label{K}
\mathbb{K}=\Big\{
(K_t):  K_0=0,  \textit{cont. paths  } \mathcal{P}\textit{-q.s.}, {increasing, }\: \mathbb{E}\sup_{t\in[0,T]} K_t^2 <\infty \Big\}.
\end{eqnarray}
 The following Theorem clarifies this issue, see \cite{soner2011martingale} for a  proof.
\begin{theorem} \label{MRT}
 For every  $X\in L^2_\mathcal{P}$, there exist a  unique pair $(\eta,K)\in\mathcal{M}\times \mathbb{K}$, where 
 $(-K_t)$ is a $\mathbb{E}$--martingale, 
such that  for all  $t\in[0,T]$
\begin{eqnarray*}
 \mathbb{E}_tX=\mathbb{E}_0X+   {\int_0^t} \eta_s \textnormal{d}B_s -K_t,\quad \mathcal{P}\textit{-q.s.}
\end{eqnarray*}
\end{theorem}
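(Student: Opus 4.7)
The plan is to establish the decomposition first on a dense subset of smooth cylindrical payoffs and then extend to all of $L^2_\mathcal{P}$ by a completion argument in $\mathcal{M}\times\mathbb{K}$, handling uniqueness separately at the end.

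For the smooth step, I would take $X=\varphi(B_{t_1},B_{t_2}-B_{t_1},\ldots,B_{t_n}-B_{t_{n-1}})$ with $\varphi\in C_b^\infty(\mathbb{R}^{nd})$. The tower property of Lemma \ref{lem} together with the Markov--type structure of $\mathbb{E}$--Brownian motion identifies $\mathbb{E}_t X$ with $u(t,B_t)$, where $u$ solves the fully nonlinear $G$--heat equation $\partial_t u+G(D^2_x u)=0$ with terminal datum $\varphi$ and $G(A)=\tfrac{1}{2}\sup_{\gamma\in\Sigma}\mathrm{tr}(\gamma A)$. Peng's It\^o formula for $\mathbb{E}$--Brownian motion applied to $u(t,B_t)$ then yields
\begin{equation*}
\mathbb{E}_t X = \mathbb{E}_0 X + \int_0^t D_x u(s,B_s)\,dB_s + \tfrac{1}{2}\int_0^t \mathrm{tr}\!\left(D^2_x u(s,B_s)\, d\langle B\rangle_s\right) - \int_0^t G(D^2_x u(s,B_s))\,ds .
\end{equation*}
Setting $\eta_s=D_x u(s,B_s)$ and letting $K$ be the last two terms with signs flipped, the quasi--sure bound on $d\langle B\rangle_s$ together with the sublinearity of $G$ forces $K$ to be non--decreasing $\mathcal{P}$--q.s., while the representation of $G$ as a supremum of traces yields $\inf_{P\in\mathcal{P}}E^P_s(K_t-K_s)=0$, so that $-K$ is an $\mathbb{E}$--martingale.

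To extend to general $X\in L^2_\mathcal{P}$, I would use that $C_b^\infty$--cylindrical functionals are dense in $L^2_\mathcal{P}$ and view $(\eta,K)\mapsto\int_0^T\eta\,dB-K_T$ as a continuous linear map from $\mathcal{M}\times\mathbb{K}$ into $L^2_\mathcal{P}$, relying on the BDG--type isometry $\mathbb{E}|\!\int_0^T\eta\,dB|^2\asymp\|\eta\|_\mathcal{M}^2$ and Doob's inequality for the $\mathbb{E}$--submartingale $K$. Closing under these norms transports the decomposition from the smooth class to all of $L^2_\mathcal{P}$. Uniqueness follows by subtracting two candidate decompositions $(\eta,K)$ and $(\eta',K')$: the identity $\int(\eta-\eta')dB=K-K'$ exhibits a symmetric $\mathbb{E}$--martingale as a continuous process of finite variation, which by Proposition 3.3 of \cite{soner2011martingale} must vanish $\mathcal{P}$--q.s.

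The main obstacle will be the aggregation hidden inside the density step. Because $\mathcal{P}$ is non--dominated, one cannot produce $(\eta,K)$ by fixing a single reference measure and invoking the classical Kunita--Watanabe representation; instead the approximating processes $(\eta^n,K^n)$ must be shown to converge to a single pair of universally defined objects that realize the decomposition simultaneously under every $P\in\mathcal{P}$. This is the content of the quasi--sure stochastic calculus of Denis--Martini and Soner--Touzi--Zhang, and it crucially exploits that $\mathcal{P}$ consists of martingale laws whose quadratic variations are confined to the compact set $\Sigma$.
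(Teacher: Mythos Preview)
The paper does not prove Theorem~\ref{MRT}; it simply states the result and refers to \cite{soner2011martingale} for a proof. So there is no ``paper's own proof'' to compare against, only the cited source.

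Your sketch is broadly the Peng-style route (nonlinear PDE for Lipschitz cylinder functions, then density), which is one legitimate path to the result and is close in spirit to what Soner--Touzi--Zhang build on. Two points deserve tightening. First, for a multi-time cylindrical $X=\varphi(B_{t_1},\ldots)$ the identification $\mathbb{E}_t X=u(t,B_t)$ is not correct as written: the conditional expectation depends on all $B_{t_j}$ with $t_j\le t$, and one must run a backward recursion over the partition, solving a $G$--heat equation on each subinterval with terminal datum coming from the previous step. This is standard but should be said. Second, and more seriously, your extension step treats $(\eta,K)\mapsto\int\eta\,dB-K_T$ as a ``continuous linear map'' on $\mathcal{M}\times\mathbb{K}$, but $\mathbb{K}$ is a convex cone of increasing processes, not a linear space, and you need to show that the limit $K$ remains increasing with $-K$ an $\mathbb{E}$--martingale. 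Doob's inequality alone does not give this; one needs an a~priori bound of the type $\mathbb{E}K_T^2\le C\,\mathbb{E}X^2$ uniform over the approximants, together with a stability/aggregation argument so that the $P$--wise Doob--Meyer pieces glue into a single quasi--surely defined $K$. That is exactly the substance of the Soner--Touzi--Zhang paper you cite, and your last paragraph correctly flags it as the real difficulty; just be aware that it is not a routine completeness argument but the heart of the proof.
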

The  increasing $\E$--martingale  $-K$ refers a correction term for the ``overshooting" of the sub--linear expectation $\mathbb{E}_t$. Specifically, the conditional Knightian  expectation enforces $t\mapsto \mathbb{E}_tX$ to be a supermartingale under every $P\in\mathbb{P}$. 
For some effective priors $P\in \mathcal{P}$,   $\mathbb{E}_tX$  is an $E^P$-martingale. Foreclosing Corollary \ref{MRTCor}, these two possible cases, can be distinguished, by the fact  ${E}^PK_T=0$ if and only if  $E^PX=\mathbb{E}X$. 

 The following corollary   illustrates  which random variables have the replication property in terms of a stochastic integral.  In this connection, a fair game against nature refers to the symmetric $\mathbb{E}$--martingale property. Apparently, in this situation  the process is equivalently an $E^P$-martingale under every $P\in \mathcal{P}$.
\begin{corollary}\label{MRTCor}
The  space $\mathbb{M}$ of  mean ambiguous--free contingent claims is a closed subspace of $\mathbb{H}=L^2_\mathcal{P}$. More precisely,  we have
\begin{eqnarray*}
\mathbb{M}=\Big\{X\in\mathbb{H}:\:\:X=\mathbb{E}X+\int_0^T \eta_s \textnormal{d}B_s \textit{ for some }\eta\in\mathcal{M}\Big\}.
\end{eqnarray*}
\end{corollary}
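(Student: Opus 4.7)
The plan is to deduce this Corollary directly from the Martingale Representation Theorem \ref{MRT}. The key observation is that in the representation $X = \mathbb{E}X + \int_0^T \eta_s\,\textnormal{d}B_s - K_T$, the stochastic integral piece is a symmetric $\mathbb{E}$--martingale (it has mean zero under every $P\in\mathcal{P}$), while the increasing process $K_T$ is precisely the obstruction to $X$ being mean ambiguity--free. So the whole argument is: \emph{mean ambiguity--freeness is equivalent to $K_T=0$ quasi--surely}.

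First I would establish the easy inclusion. Suppose $X=\mathbb{E}X+\int_0^T\eta_s\,\textnormal{d}B_s$ with $\eta\in\mathcal{M}$. As noted in the proof of Theorem \ref{ThmRad}, $E^P\int_0^T\eta_s\,\textnormal{d}B_s=\mathbb{E}\int_0^T\eta_s\,\textnormal{d}B_s=0$ for every $P\in\mathcal{P}$, so $E^PX=\mathbb{E}X$ for all $P$ and $X\in\mathbb{M}$.

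For the converse, let $X\in\mathbb{M}$ and apply Theorem \ref{MRT} to obtain the unique pair $(\eta,K)\in\mathcal{M}\times\mathbb{K}$ with
\begin{equation*}
X=\mathbb{E}X+\int_0^T\eta_s\,\textnormal{d}B_s - K_T \quad \mathcal{P}\textnormal{-q.s.}
\end{equation*}
Taking $E^P$ on both sides and using that the stochastic integral has zero $P$--expectation gives $E^P X=\mathbb{E}X - E^P K_T$. By mean ambiguity--freeness, $E^PX$ is constant in $P$ and equal to $\sup_P E^P X=\mathbb{E}X$, so $E^PK_T=0$ for every $P\in\mathcal{P}$. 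Since $K\in\mathbb{K}$ is increasing with $K_0=0$, we have $K_T\geq 0$ $\mathcal{P}$--q.s., hence $K_T=0$ $\mathcal{P}$--q.s. This yields the desired representation, and in particular pins down $\eta$ uniquely via Theorem \ref{MRT}.

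Finally, for the closed subspace claim, linearity of $\mathbb{M}$ follows from linearity of each $E^P$ (the difference $E^P-E^Q$ is linear, so its kernel is a subspace). For closedness in $L^2_\mathcal{P}$, if $X_n\to X$ in $\Vert\cdot\Vert_2$, then $E^PX_n\to E^PX$ uniformly in $P$ because $|E^PX_n-E^PX|\le E^P|X_n-X|\le\mathbb{E}|X_n-X|\le\Vert X_n-X\Vert_2\to 0$, and similarly $\mathbb{E}X_n\to\mathbb{E}X$; since each $E^PX_n=\mathbb{E}X_n$, passing to the limit gives $E^PX=\mathbb{E}X$ for all $P$. The only delicate step I expect is the assertion that the stochastic integral is a genuine (not merely local) symmetric $\mathbb{E}$--martingale, so that the identity $E^P\int_0^T\eta_s\,\textnormal{d}B_s=0$ holds for every $P\in\mathcal{P}$ simultaneously; this is the substantive content already encapsulated in the results of \cite{soner2011martingale} invoked in Theorem \ref{MRT} and is what makes the proof short.
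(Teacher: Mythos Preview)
Your proof is correct and is exactly the argument the paper has in mind. The paper does not give a formal proof of Corollary~\ref{MRTCor}; it only records, in the paragraph preceding the corollary, the key fact that ``$E^PK_T=0$ if and only if $E^PX=\mathbb{E}X$'' and, just after it, that ``perfect replication is associated to the situation when $K\equiv 0$.'' Your argument makes this precise in the natural way: apply Theorem~\ref{MRT}, use that the stochastic integral is a symmetric $\mathbb{E}$--martingale to see $E^PK_T=\mathbb{E}X-E^PX$, and conclude $K_T=0$ $\mathcal{P}$--q.s.\ from nonnegativity. Your treatment of the closed subspace claim via the uniform estimate $|E^PX_n-E^PX|\le\mathbb{E}|X_n-X|\le\|X_n-X\|_2$ is also correct and is not spelled out in the paper at all, so in that respect your write--up is more complete than the original.
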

The notion of perfect replication is associated to the situation when $K\equiv0$. Elements in $\mathbb{M}$ generate symmetric martingales, via the successive application of the conditional  Knightian expectation along the augmented filtration $(\mathcal{F}_t)$. 

\begin{remark}\label{MRTRem}
  When $X$ is contained in a  subset of $ L^2_\mathcal{P}$, including  all $\phi(B_T)$ with $\phi:\mathbb{R}\rightarrow \mathbb{R}$ Lipschitz continuous, then the $\mathbb{E}$--martingale $-K$ admits an explicit  representation, 
\begin{eqnarray*}
K_t=\int_0^t \varphi_r \textnormal{d}\langle B^G\rangle_r - \int_0^t G( \varphi_r) \textnormal{d}r,\quad t\in[0,T],
\end{eqnarray*}
where  $\varphi$ is an endogenous output of the martingale representation, so that $K$ becomes a   function of $\varphi$.  If $d=1$, the function $G$ is given by $G(x)=\frac{1}{2}\sup_{\sigma \in \Sigma} \sigma^2x$. 
As such it is an open problem, if every $X\in L^2_\mathcal{P}$ can be represented in this complete form. We refer to \cite{peng2013complete} for the latest  discussion.
\end{remark}
\end{appendix}

\end{document}